\def\BibTeX{{\rm B\kern-.05em{\sc i\kern-.025em b}\kern-.08em
    T\kern-.1667em\lower.7ex\hbox{E}\kern-.125emX}}
\newtheorem{theorem}{Theorem}
\newtheorem{definition}{Definition}[section]
\DeclarePairedDelimiter\ceil{\lceil}{\rceil}
\begin{document}

\newtheorem{observation}{Observation}[]
\newcommand{\justification}{\noindent
$\textbf{Justification}. \ $}
\theoremstyle{definition}
\newtheorem{Example}{Example}[]
\newtheorem{assumption}{Assumption}

\newcommand{\ith}[1]{${#1}^{\textit{th}} $}
\def \ie{i.e.,~}
\def \eg{e.g.,~}

\newcommand{\todo}{{\textbf{\color{red}[TO-DO]:}}}
\newcommand{\etal}{\textit{et al}.}

\newcommand{\figref}[1]{figure \ref{fig:#1}}
\newcommand{\Figref}[1]{Figure \ref{fig:#1}}
\newcommand{\tabref}[1]{Table~\ref{tab:#1}}
\newcommand{\semicolonmath}{\ \textbf{;} \ }
\newcommand{\conditionX}{
\Big|_{\substack{\textbf{x} = \textbf{x}^{(k+1), NORTH}}}
}

\newcommand{\xkth}{\textbf{x}^{(k)} }
\newcommand{\Pkth}{\textbf{P}^{(k)} }

\pagestyle{headings}

\title{Joint Optimization of Continuous Variables and Priority Assignments for Real-Time Systems with Black-box Schedulability Constraints}

\author{
Sen Wang,
Dong Li,
Shao-Yu Huang,
Xuanliang Deng,
Ashrarul H. Sifat, \\
Changhee Jung,
Ryan Williams,
Haibo Zeng\thanks{
This work is partially supported by NSF Grants No. 1812963 and 1932074.
Sen Wang, Dong Li, Xuanliang Deng, Ryan Williams, and Haibo Zeng are with Virginia Tech, Blacksburg, USA;
Shao-Yu Huang and Changhee Jung are with Purdue University, West Lafayette, USA. contact emails: swang666@vt.edu, hbzeng@vt.edu.
} 
}

\maketitle

\begin{abstract}
In real-time systems optimization, designers often face a challenging problem posed by the non-convex and non-continuous schedulability conditions, which may even lack an analytical form to understand their properties. 
To tackle this challenging problem, we treat the schedulability analysis as a black box that only returns true/false results. We propose a general and scalable framework to optimize real-time systems, named Numerical Optimizer with Real-Time Highlight (NORTH). 
NORTH is built upon the gradient-based active-set methods from the numerical optimization literature but with new methods to manage active constraints for the non-differentiable schedulability constraints.
In addition, we also generalize NORTH to NORTH+, to collaboratively optimize certain types of discrete variables (\eg priority assignments, categorical variables) with continuous variables based on numerical optimization algorithms.
We demonstrate the algorithm performance with two example applications: energy minimization based on dynamic voltage and frequency scaling (DVFS), and optimization of control system performance. In these experiments, NORTH achieved $10^2$ to $10^5$ times speed improvements over state-of-the-art methods while maintaining similar or better solution quality. NORTH+ outperforms NORTH by 30\% with similar algorithm scalability. Both NORTH and NORTH+ support black-box schedulability analysis, ensuring broad applicability.
\end{abstract}

\begin{IEEEkeywords}
Real-Time System, System Design, Optimization, Numerical Optimization, Blacksbox Schedulability Analysis.
\end{IEEEkeywords}

\section{Introduction}

In recent years, the real-time systems community has developed many impressive scheduling algorithms and schedulability analysis techniques. 
However, performing optimization with these schedulability constraints is challenging because these constraints are often non-differentiable and/or non-convex, such as the ceiling function for response time calculation~\cite{audsley1993applying}. 
Some other schedulability analyses are even more challenging because they do not have an analytical form, such as those based on demand-bound functions~\cite{Baruah1990PreemptivelySH}, real-time calculus~\cite{Thiele2000}, abstract event model interfaces~\cite{henia2005system}, and timed automata~\cite{Larsen1997UppaalIA, Nasri2019ResponseTimeAO}. 

Optimizing real-time systems faces the additional challenge of their ever-increasing complexity.
The functionality of modern real-time systems is rapidly expanding~\cite{Heintzman2021-mw}, and there may be hundreds of software tasks~\cite{Kramer15benchmark} and even more runnable in automotive systems. 
The underlying hardware and software systems are also becoming more sophisticated, such as heterogeneous computing platforms, specialized hardware accelerators, and domain-specific operating systems. 

The existing real-time system optimization algorithms cannot adequately address the above two challenges as they lack scalability and/or applicability. 
For example, it is often challenging to model complicated schedulability constraints in standard mathematical optimization frameworks. 
Although some customized optimization frameworks \cite{zhao2018unified, Zhao2020AnOF, zhao2022design} have been designed for real-time systems, these frameworks usually rely on special assumptions, such as \emph{sustainable} schedulability analysis\footnote{Sustainable analysis means that under this analysis, a schedulable task set should remain schedulable when its parameters become ``better'', \eg shorter WCET or longer period~\cite{baruah2006sustainable, Burns2008SustainabilityIR}.} with respect to the design variables (\eg periods~\cite{zhao2018unified} and worst-case execution times (WCETs)~\cite{Zhao2020AnOF, zhao2022design}).
Such special assumptions may not hold in certain schedulability analysis~\cite{Nasri2019ResponseTimeAO} or could be difficult to verify in general schedulability analysis.

This paper proposes a new optimization framework for real-time systems with \textit{blackbox} schedulability analysis that only returns true/false results. 
The framework, called Numerical Optimizer with Real-Time Highlight (NORTH), utilizes numerical optimization methods for real-time systems optimization. 
Compared to alternative optimization frameworks such as Integer Linear Programming (ILP), numerical optimization methods are more general (they work with objective functions or constraints of many forms) and have good scalability for large-scale optimization problems. 
Unfortunately, existing numerical methods cannot be directly applied to optimize real-time systems. 
For example, gradient-based methods often rely on well-defined gradient information, but many schedulability constraints are not differentiable. 
Targeted at the issues above, we modify the gradient-based method to avoid evaluating the gradient of schedulability constraints and then propose a novel technique, variable elimination (VE), to enable searching for better solutions along the boundary of schedulable solution space. These techniques are further generalized to assist in optimizing both continuous variables and priority assignments, a common type of discrete variables in real-time systems.


To the best of our knowledge, this paper presents the first optimization framework for real-time systems based on numerical algorithms. The framework is applied to two example problems: energy minimization based on DVFS and control system performance optimization. Compared with the alternatives, NORTH and NORTH+ have the following advantages:
\begin{itemize}
\item \textit{Generality}: NORTH supports optimization with any schedulability analysis that provides true/false results. Furthermore, NORTH+ also supports optimizing a mix of continuous variables and priority assignments.
    
\item \textit{Scalability}: NORTH and NORTH+ have good scalability because the frequency of schedulability analysis calls typically scales polynomically with the number of variables.
    
\item \textit{Quality}: NORTH and NORTH+ adapt classical numerical optimization methods and propose new approaches to improve performance further. 
It is shown to outperform gradient-based optimizers and state-of-the-art methods.
\end{itemize}

\textbf{Extensions against previous work~\cite{Wang2023RTAS}.} 
This journal paper extends its previous conference paper~\cite{Wang2023RTAS} to simultaneously optimize continuous variables and priority assignments (Section~\ref{section_hybrid_optimization}). The framework NORTH in the conference paper~\cite{Wang2023RTAS} only handles continuous variables. To address this fundamental problem in fixed-priority scheduling, we propose an iterative priority assignment algorithm guided by numerical optimization, which is subsequently integrated into the NORTH framework to create NORTH+. Experimental results (Section~\ref{exp_north_control}) indicate that NORTH+ delivers a 30\% performance improvement over NORTH by additionally optimizing priority assignments, while maintaining a comparable scalability to NORTH. Furthermore, this paper provides more discussion on the generalization, limitations, and potential enhancements of both NORTH and NORTH+ (Section~\ref{applications}).

\section{Related Work}
The topic of optimization for real-time systems has been studied in many papers. 
Broadly speaking, they can be classified into four categories~\cite{Zhao2020AnOF}: (1) meta-heuristics~\cite{Shin2007OPTIMALPA, Tindell2004AllocatingHR} such as simulated annealing; 
(2) direct usage of standard mathematical optimization frameworks such as branch-and-bound (BnB)~\cite{Jonsson1997APB}, ILP~\cite{zeng2012efficient}, and convex programming~\cite{Aydin2006SystemLevelEM}; 
(3) problem-specific (\eg minimizing energy in systems with DVFS) methods~\cite{Bambagini2016EnergyAwareSF}; 
(4) customized optimization frameworks including~\cite{zhao2017virtual, Zhao2020AnOF, zhao2022design}. 
The meta-heuristic methods are relatively slow and are often outperformed by other methods. 
The application of other methods is limited because they usually rely on special properties of the schedulability analysis. 
For example, ILP requires that schedulability analysis can be transformed into linear functions, while \cite{Zhao2020AnOF, zhao2022design} depends on sustainable schedulability analysis.

Numerical optimization methods are widely applied in many situations because of their generality and scalability. 
Classical methods include active-set methods (ASM) and interior point methods (IPM)~\cite{Nocedal2006NumericalO2}, and their recent extensions, such as Dog-leg~\cite{POWELL197031}, Levenberg-Marquardt~\cite{Marquardt1963AnAF}, and gradient-projection~\cite{Balashov2019GradientPA}. 
However, these optimization methods cannot be directly utilized for non-differentiable schedulability constraints. 
Although numerical gradient could be helpful, it may become misleading at non-differentiable points. 
Gradient-free methods, \eg model-based interpolation and the Nelder–Mead method~\cite{Nelder1965ASM}, usually run much slower and are less well-studied than gradient-based methods~\cite{Nocedal2006NumericalO2}. 
Some recent studies utilize machine learning to solve some real-time system problems~\cite{Lee2021MLFR, Bo2021DevelopingRS}. However, these methods may make the design process more time-consuming due to the challenge of preparing a large-scale training dataset.

Energy minimization based on Dynamic Voltage and Frequency Scaling (DVFS) has been extensively studied for systems with different scheduling algorithms~\cite{Yao1995ASM, Pillai2001RealtimeDV, Aydin2006SystemLevelEM, Lee2004OnlineDV, Bini2009MinimizingCE, li2024energyefficientcomputationdvfsusing} and more sophisticated system models, such as directed acyclic graph (DAG) models on multi-core~\cite{Bhuiyan2020EnergyEfficientPR}, mixed-criticality scheduling~\cite{Bhuiyan2020OptimizingEI}, and limited-preemptive DAG task models~\cite{Nasri2019ResponseTimeAO}. 
These schedulability analysis methods could be complicated and lack the special properties upon which state-of-art approaches rely. For instance, the schedulability analysis in Nasri \etal~\cite{Nasri2019ResponseTimeAO} is not sustainable. 

The second example application is the optimization of control performance~\cite{Mancuso2014OptimalPA}, which is usually modeled as a function of periods and response times~\cite{Zhao2020AnOF}. Various algorithms have been proposed for this problem, such as mixed-integer geometric programming~\cite{Davare2007PeriodOF}, genetic algorithm~\cite{Shin2007OPTIMALPA}, BnB~\cite{Bini2005OptimalTR, Mancuso2014OptimalPA}, and some customized optimization frameworks~\cite{zhao2018unified, Zhao2020AnOF}. 
However, the applications of these works are limited to certain schedulability analyses or system models.

\section{System model}
This section introduces common notations and the problem description for continuous variable optimization. The system model for co-optimizing continuous variables and priority assignments will be introduced later.
\subsection{Notations}
In this paper, scalars are denoted with light symbols, while vectors and matrices are denoted in bold. Subscripts such as $i$ usually represent the \ith{i} element within vectors.
We denote the iteration number in a superscript in parentheses during optimization iterations. For example, $\textbf{x}_i^{(0)}$ denotes the \ith{i} element of a vector $\textbf{x}$ at the \ith{0} iteration.
We use $\|\textbf{v} \|$ to denote the Euclidean norm of a vector $\textbf{v}$, $\|\textbf{v} \|_1$ for norm-1, $|\textbf{S}|$ for the number of elements in a set $\textbf{S}$, and $|x|$ for the absolute value.
We typically use $h$ to denote numerical granularity ($10^{-5}$ in experiments), $\textbf{x}$ for variables, $\mathcal{F}(\textbf{x})$ for objective functions, and $\boldsymbol{\nabla} \mathcal{F}(\textbf{x})$ for gradient.
In task sets, we usually use $\tau_i$ to denote a task and $r_i(\textbf{x})$ for a response time function of $\tau_i$ that depends on the variable $\textbf{x}$.
In the context of applications, we adhere to the standard notation within the specific domain to avoid any potential confusion.

In cases where $\mathcal{F}(\textbf{x})$ has a sum-of-square form:
\begin{equation}
  \min_{\textbf{x}} \sum_i \mathcal{F}^2_i(\textbf{x})
  \label{sum_of_item}
\end{equation}
then we have a Jacobian matrix $\textbf{J}$ defined as follows:
\begin{equation}
    \textbf{J}_{ij} = \frac{\partial \mathcal{F}_i(\textbf{x})}{\partial \textbf{x}_j}
    \label{jacobian_formal}
\end{equation}
where $\textbf{J}_{ij}$ is the entry of $\textbf{J}$ at the \ith{i} row and the \ith{j} column.
$\textbf{J}$'s transpose is denoted as $\textbf{J}^T$.


\subsection{Concepts from Numerical Optimization}
\label{section_terminology}
For clarity, we explain some terms in numerical optimization following Nocedal~\etal~\cite{Nocedal2006NumericalO2}:
\begin{definition} [Differentiable point]
     If the objective function $\mathcal{F}(\textbf{x})$ is differentiable at $\textbf{x}$, then $\textbf{x}$ is a differentiable point.
\end{definition}

\begin{definition} [Descent vector]
    A vector $\boldsymbol{\Delta}$ is called a descent vector for function $\mathcal{F}(\textbf{x})$ at $\textbf{x}$ if 
    \begin{equation}
        \mathcal{F}(\textbf{x} + \boldsymbol{\Delta}) < \mathcal{F}(\textbf{x})
    \end{equation}
\end{definition}

\begin{definition}[Descent direction]
A vector $\boldsymbol{\Delta}$ is a descent direction if there is $\alpha>0$ such that $\alpha \boldsymbol{\Delta}$ is a descent vector.    
\end{definition}


\noindent\textbf{Active/inactive constraints}:
An inequality constraint $g(\textbf{x}) \leq 0$ is \textit{active} at a point $\textbf{x}^*$ if $g(\textbf{x}^*) = 0$. 
$g(\textbf{x}) \leq 0$ is \textit{inactive} if it holds with strict ``larger/smaller than'' (\ie $g(\textbf{x}^*) < 0$ at $\textbf{x}^*$).
Equality constraints are always active constraints.

\noindent\textbf{Active-set methods (ASM)}: 
In constrained optimization problems, ASM only considers active constraints when finding an update step in each iteration.
This is based on the observation that inactive constraints do not affect an update if the update is small enough. Examples of ASM include the simplex method and the sequential-quadratic programming.

\noindent\textbf{Trust-region methods (TRM)}: TRM are optimization techniques that iteratively update a searching region around the current solution. 
TRM adapts the size of the trust region during each iteration, balancing between exploring new solutions within the trust region and ensuring that the objective function within the trust region is accurately approximated. 


\subsection{Problem Formulation}
We consider a real-time system design problem as follows:
\begin{align}
     \min_{\textbf{x}} \ \ \ \ & \ \mathcal{F}=\sum_i \mathcal{F}_i(\textbf{x})
    \label{general_F} \\
   \textit{subject to}\ \ & \   \text{Sched}(\textbf{x}) = 0
\label{schedulability_analysis_true_false} \\
   & \ \text{lb}_i \leq \textbf{x}_i \leq \text{ub}_i
    \label{general_inequality_constraint}
\end{align}
\noindent where $\textbf{x} \in \mathbb{R}^N$ denotes the optimization variables or design choices, such as run-time frequency, task periods, or priority assignments; $N$ denotes the number of decision variables; $\text{lb}_i$ and $\text{ub}_i$ denote $\textbf{x}_i$'s lower bound and upper bound, respectively; $\mathcal{F}_i(\textbf{x}): \mathbb{R}^N \xrightarrow[]{} \textbf{R}$ denotes the objective function, such as energy consumption. 
The schedulability analysis constraint is only required to return binary results:
\begin{equation}
    \text{Sched}(\textbf{x})=\begin{cases}
        0, & \text{system is schedulable}\\
        1, & \text{otherwise}
    \end{cases}
    \label{general_sched_function}
\end{equation}
The problem formulation above is generally applicable in many challenging situations where the schedulability analysis does not have analytical forms, such as demand bound functions~\cite{Baruah1990PreemptivelySH} or timed automata~\cite{Larsen1997UppaalIA, Nasri2019ResponseTimeAO}.

\begin{assumption}
\label{assumption_feasible}
A feasible initial solution is available. 
\end{assumption}
For instance, in DVFS, the maximum CPU frequency could be the feasible initial solution in many situations. Otherwise, the problem may not have feasible solutions that can be easily found.

\begin{assumption}
\label{assumption_continuous}
     The variables $\textbf{x}$ are continuous variables, such as task WCETs and periods. 
\end{assumption}
Note that Assumption~\ref{assumption_continuous} does \textit{not} assume the constraints or objective functions to be differentiable with respect to $\textbf{x}$.

Assumption~\ref{assumption_continuous} only applies to the optimization framework NORTH introduced in Section~\ref{section_nmbo} and~\ref{section_how_to_eliminate} but not for NORTH+ in Section~\ref{section_hybrid_optimization}, which introduces how to optimize priority assignments.
Section~\ref{applications} discusses how to relax these two assumptions. 

\noindent \textbf{Challenges.} The primary challenge is the blackbox schedulability constraint \eqref{schedulability_analysis_true_false} which cannot provide gradient information.
However, most numerical optimization algorithms are proposed for problems with differentiable objective functions and/or constraints. Straightforward applications with numerical gradients may cause significant performance loss. 
Moreover, schedulability analysis can be computationally expensive in many cases. Therefore, minimizing the number of schedulability analysis calls is essential to achieve good algorithm scalability.

\subsection{Application: Energy Optimization}
\label{energy_opt_intro}
We first consider an energy minimization problem based on DVFS~\cite{Bambagini2016EnergyAwareSF}. 
DVFS reduces power consumption by running tasks at lower CPU run-time frequencies, albeit at the expense of longer response times. Therefore, it is important to guarantee the system's schedulability while optimizing energy consumption. 
Given a task set of $N$ periodic tasks, we hope to change the run-time frequency $\textbf{f}$ to minimize the energy consumption $E_i(\textbf{f})$ of each task $\tau_i$:
\begin{align}
 \min_{\textbf{f}}  \ \ \ \ & \  \sum_{i=0}^{N-1} (\sqrt{E_i(\textbf{f})})^2 
    \label{ls_energy}\\
    \textit{subject to}\ \ & \   \text{Sched}(\textbf{x}) = 0\\
    & \ \text{lb}_i \leq \textbf{f}_i \leq \text{ub}_i, i \in [0, N-1]
\end{align}
where the energy function $E_i(\textbf{f})$ is estimated as:
\begin{equation}
    E_i(\textbf{f}) =\frac{H}{T_i} (\beta_E + \alpha_E \textbf{f}_i^{\gamma_E}) \times c_i 
\end{equation}
where $H$ is the hyper-period (the least common multiple of tasks' periods), $T_i$ is $\tau_i$'s period. 
The energy function~\cite{Huang2014EnergyED, Guo2019EnergyEfficientRS} considers the static and dynamic power consumption~\cite{Guo2019EnergyEfficientRS, Pagani2013EnergyET} with parameters $\alpha_E = 1.76 \text{Watts} / \text{GHz}^3$, $\gamma_E=3$, $\beta_E=0.5$. 
The execution time $c_i$ of $\tau_i$ is determined by a frequency model~\cite{Bambagini2016EnergyAwareSF, Aydin2006SystemLevelEM}:
\begin{equation}
    c_i = c_i^{\text{fix}} + \frac{c_i^{\text{var}}}{\textbf{f}_i}
    \label{frequency_model}
\end{equation}
where both speed-independent and speed-dependent operations are considered.

The schedulability analysis can be provided by any method. For the sake of comparing with baseline methods, the response time analysis (RTA) model for fixed task priority scheduling in single-core and preemptive platforms~\cite{Joseph1986FindingRT} is first considered:
\begin{equation}
    r_i = c_i +\sum_{j \in \text{hp}(i)} \ceil{\frac{r_i}{T_j}}{c_j}
    \label{rta_LL}
\end{equation}
where $\text{hp}(i)$ denotes the tasks with higher priority than $\tau_i$. 
Denote $\tau_i$'s deadline as $D_i$, and we have
\begin{equation}
        \text{Sched}(\textbf{f})=\begin{cases}
        0, & \forall i, r_i(\textbf{f}) \leq D_i \\
        1, & \text{otherwise}
    \end{cases}
    \label{sched_model_ll}
\end{equation}

Another schedulability analysis considered is the model verification methods proposed by Nasri~\etal~\cite{Nasri2019ResponseTimeAO} for node-level preemptive DAG tasks.
These methods often provide less pessimism than many analytic approaches in real systems. However, many available methods~\cite{Zhao2020AnOF} cannot be applied to optimize with it due to the lack of analytical expressions and sustainability property.

\subsection{Application: Control Quality Optimization}
\label{control_opt_intro}
The second application focuses on optimizing the control performance~\cite{Zhao2020AnOF}. 
Similar to the problem description in Zhao~\etal~\cite{Zhao2020AnOF}, the control system performance is approximated by a function of period $\textbf{T}_i$ and response time $r_i(\textbf{T}, \textbf{P})$:
\begin{align}
    \min_{\textbf{T}, \textbf{P}} \ \ \ \ & \  \sum_{i=0}^{N-1} (\sqrt{\alpha_i \textbf{T}_i + \beta_i r_i(\textbf{T}, \textbf{P})+ \gamma_i r_i^2(\textbf{T}, \textbf{P}) } )^2 
    \label{ls_control}\\
     \textit{subject to}\ \ & \  \text{Sched}(\textbf{T}, \textbf{P}) = 0 \\
    & \ \text{lb}_i \leq \textbf{T}_i \leq \text{ub}_i, i \in [0, N-1]\label{ls_bound}
\end{align}
where $\textbf{T}_i$ is the task $\tau_i$'s period; $\textbf{P}$ denotes the task set's priority assignments; $\alpha_i$, $\beta_i$ and $\gamma_i$ are control system's weight parameters and can be estimated from experiments; $r_i(\textbf{T}, \textbf{P})$ is $\tau_i$'s response time. 
We used the schedulability analysis proposed in \cite{Nasri2019ResponseTimeAO} in our experiments, although any other forms of schedulability analysis are also supported.

\begin{figure}[ht]
\centering
\includegraphics[width=0.45\textwidth]{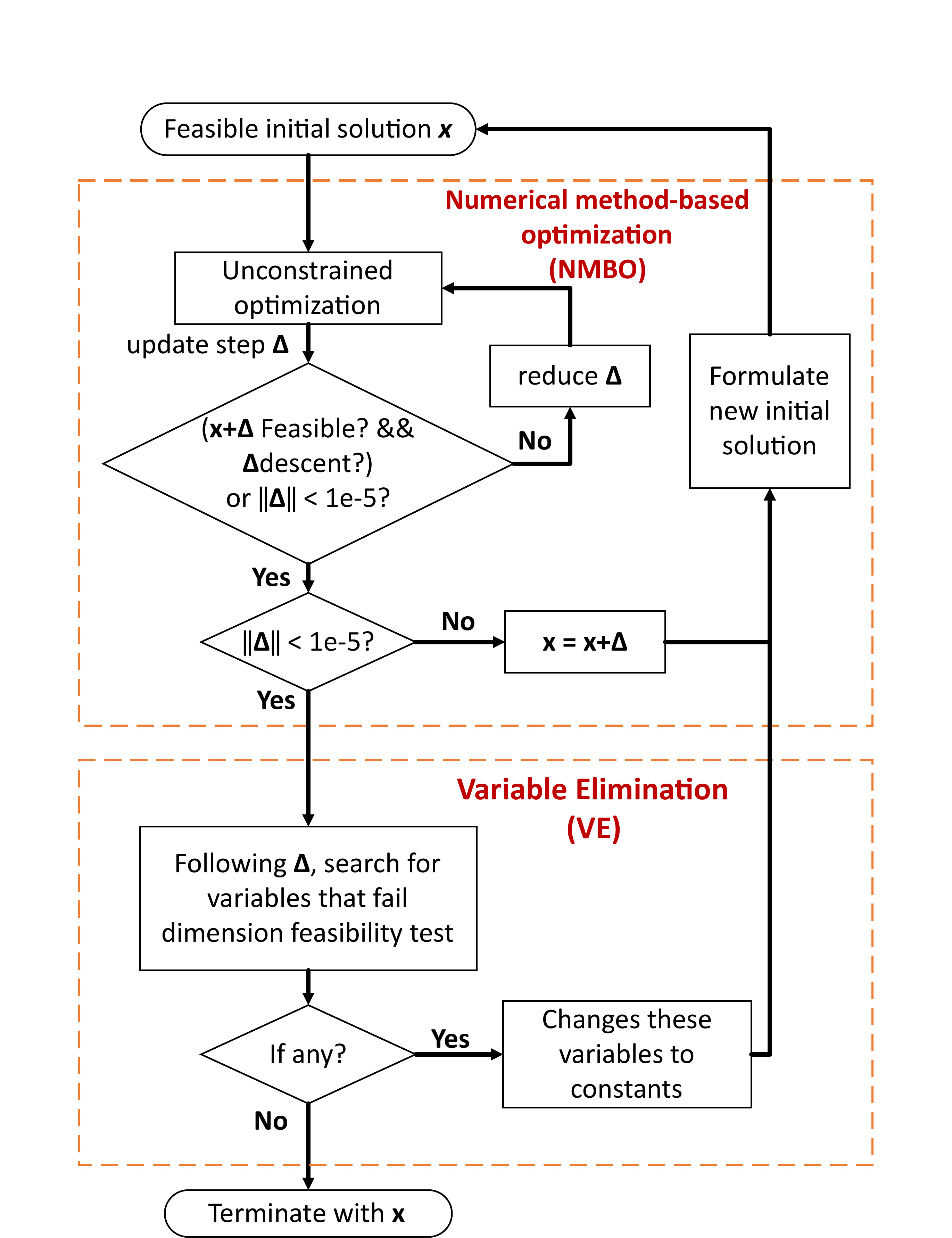}
\caption{NORTH framework and components overview.\\
\textbf{NMBO}: Beginning with a feasible solution $\textbf{x}$, NORTH first utilizes trust-region methods to find an update direction $\boldsymbol{\Delta}$ for
the optimization problem without the schedulability constraints. 
If $\boldsymbol{\Delta}$ leads $\textbf{x}$ into an infeasible region, then $\boldsymbol{\Delta}$ is kept decreased until $\textbf{x}+\boldsymbol{\Delta}$ is feasible. When $\boldsymbol{\Delta}$ becomes small enough such as $\| \boldsymbol{\Delta}\| < 10^{-5}$, performing further iterations only brings small performance improvements, and therefore we terminate the iterations and move on to the next step. \\
\textbf{VE}: After NMBO terminates, we check if there are close active constraints. If so, the involved variables are transformed into constants in future iterations. The algorithm terminates when there are no more variables to optimize.} 
\label{fig:main_framework_fig}
\end{figure}

\section{Numerical method-Based Optimization}
\label{section_nmbo}

This section introduces a Numerical Method-Based Optimization (NMBO) algorithm to address the challenge of non-differentiable schedulability constraints. 
Leveraging active-set methods (ASM) and trust-region methods (TRM), NMBO allows temporarily ``ignoring'' inactive constraints during iterations, therefore avoiding analyzing gradients of the non-differentiable schedulability constraints. 

The flowchart of NMBO is shown in Figure~\ref{fig:main_framework_fig}.
Each iteration begins with a feasible solution $\textbf{x}^{(k)}$ and performs unconstrained optimization to obtain an update step $\boldsymbol{\Delta}$. 
It then verifies whether $\textbf{x}^{(k)} + \boldsymbol{\Delta}$ is feasible and improves $\mathcal{F}(\textbf{x}^{(k)})$. If so, $\boldsymbol{\Delta}$ is accepted; 
otherwise, we decrease $\boldsymbol{\Delta}$ based on trust-region algorithms~\cite{Nocedal2006NumericalO2}. 
Finally, NMBO will terminate at either a stationary point (\ie points with zero gradients) or feasible region boundaries.

The unconstrained optimizer we use is Levenberg-Marquardt algorithm~\cite{Levenberg1944AMF, Marquardt1963AnAF} (LM), one of the classical trust-region algorithms.
In each iteration, LM updates with the following formula:
\begin{equation}
    (\textbf{J}^{(k)T} \textbf{J}^{(k)} + \lambda \ 
    \text{diag}(\textbf{J}^{(k)T} \textbf{J}^{(k)}))
    \boldsymbol{\Delta} = -\textbf{J}^{(k)T} \boldsymbol{\mathcal{F}}(\textbf{x}^{(k)})
    \label{lm_update}
\end{equation}
where $\textbf{J}^{(k)}$ is the Jacobian matrix after linearizing the objective function~\eqref{sum_of_item}.
We can control the parameter $\lambda$ in Equation~\eqref{lm_update} to change the length of $\boldsymbol{\Delta}$.
For example, larger $\lambda$ implies smaller $\boldsymbol{\Delta}$. 

\subsection{Why NMBO alone is not enough: Challenges of Numerical Gradient}
Numerical gradients can be calculated as follows when the analytical gradient cannot be derived (\eg black-box objective functions):
\begin{equation}
    \frac{\partial \mathcal{F}_i(\textbf{x})}{\partial \textbf{x}_i} =  \frac{\mathcal{F}_i(\textbf{x}_1,..,\textbf{x}_i+h, .., \textbf{x}_N) - \mathcal{F}_i(\textbf{x}_1,..,\textbf{x}_i-h, .., \textbf{x}_N)}{2h}
    \label{numerical_jacobian}
\end{equation}
where $h=10^{-5}$ in our experiments, more consideration on how to choose $h$ can be found in Nocedal~\etal~\cite{Nocedal2006NumericalO2}. 
The numerical gradients should be used with caution because:
\begin{observation}
    Numerical gradient \eqref{numerical_jacobian} at non-differentiable points could be a non-descent vector or lead to a non-feasible solution.
    \label{unreliable_Jacobian_obs}
\end{observation}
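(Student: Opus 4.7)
The plan is to justify this observation by exhibiting minimal worked counterexamples for each of its two failure modes. Since the statement is phrased as a possibility (``could be''), it reduces to an existence claim, so a single construction per failure mode suffices.

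First I would address the non-descent part with a scalar, piecewise-linear objective that is non-differentiable at the origin, for instance $f(x) = |x| - x/2$. Direct substitution into the centered-difference formula in Equation~\eqref{numerical_jacobian} gives
\begin{equation}
\nabla_{\text{num}} f(0) \;=\; \frac{f(h) - f(-h)}{2h} \;=\; \frac{(h - h/2) - (h + h/2)}{2h} \;=\; -\frac{1}{2},
\end{equation}
so the standard gradient-descent rule proposes the direction $+1/2$. Checking the resulting trial iterate $\alpha/2$ for any $\alpha > 0$, I would compute $f(\alpha/2) = \alpha/2 - \alpha/4 = \alpha/4 > 0 = f(0)$, demonstrating that no positive scaling of the numerical-gradient direction yields a descent vector. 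The failure traces to the centered difference averaging the slopes $+1/2$ and $-3/2$ on either side of $0$, neither of which equals the computed gradient.

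For the infeasibility part, I would mimic the step behaviour induced by the ceiling in Equation~\eqref{rta_LL}. Consider a one-dimensional schedulability surrogate $g(x) = \lceil x \rceil - 1$ so that the feasible region is $(0, 1]$, combined with any strictly decreasing objective whose numerical gradient at the boundary iterate $x^{(k)} = 1$ is negative. The gradient-descent update then moves $x^{(k)}$ strictly to the right, and every positive step size lands the iterate in $(1, 2]$, where $g = 1 > 0$; trust-region shrinking cannot restore feasibility because the discontinuity is located exactly at the current iterate, so the inactive-feasibility check built into gradient methods is blindsided.

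The main obstacle here is semantic rather than technical: the calculations are one-liners, but the examples must look plausibly like real-time scheduling artefacts (ceiling-induced jumps in response-time analysis, piecewise WCET models driven by DVFS frequency bands, and so on) rather than artificial pathologies, so that the observation convincingly motivates the ASM-based treatment of active constraints and the variable-elimination machinery introduced in the subsequent sections.
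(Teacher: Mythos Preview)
Your proposal is correct and considerably more detailed than the paper's own treatment. The paper dispatches this observation with a single-sentence heuristic justification---that the exact gradient at non-differentiable points is not well-defined and is therefore sensitive to factors such as the choice of $h$---whereas you supply explicit one-dimensional constructions for each clause. Your first example cleanly exhibits a non-differentiable objective whose centered-difference gradient is nonzero at the global minimizer, so the induced step cannot be a descent vector for any positive scaling. One minor remark on the second construction: you pair a smooth objective with a discontinuous ceiling-based constraint, so the infeasibility is driven by the constraint's jump rather than by an unreliable numerical gradient of $\mathcal{F}$ at a non-differentiable point of $\mathcal{F}$ itself; strictly read, this sits closer in spirit to the paper's subsequent observation about constraint gradients. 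To align it more tightly with the present observation you could fold a response-time term directly into the objective (as in the control application) so that the non-differentiability lives in $\mathcal{F}$ and the centered difference itself becomes the culprit. That said, the paper draws no such fine distinction in its own justification, so your existence arguments more than cover the claim, and your closing paragraph on tying the constructions to ceiling-induced scheduling artefacts is precisely the motivation the paper intends.
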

\begin{justification}
This is because the exact gradient at non-differentiable points is not well-defined and, therefore, is sensitive to many factors, such as the choice of $h$.
\end{justification}
\begin{observation}
    Gradient-based constrained optimizers may terminate at infeasible solutions if the constraints are not differentiable.
\end{observation}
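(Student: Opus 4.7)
The plan is to justify this observation by tracing the mechanics of a typical gradient-based constrained optimizer and pinpointing where the non-differentiability of the schedulability constraint breaks the usual feasibility guarantees. I would frame the statement as an existential claim (``may terminate''), so a single representative failure mode, together with a conceptual argument for why it generalizes, should suffice.

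First, I would recall how gradient-based constrained optimizers use the constraint gradient. Methods such as SQP, interior-point, and gradient-projection rely on $\boldsymbol{\nabla} g(\textbf{x})$ both to (i) form a local linearization of the feasible set when computing an update step, and (ii) evaluate a KKT-style termination criterion at the current iterate. Both roles require the gradient to faithfully describe how $g$ changes locally. For the schedulability constraint \eqref{schedulability_analysis_true_false}, however, $\text{Sched}(\textbf{x}) \in \{0,1\}$ is a piecewise-constant step function, and its central-difference approximation \eqref{numerical_jacobian} is identically zero throughout the interior of the feasible region and almost identically zero throughout the interior of the infeasible region, becoming nonzero only when the probe stencil of width $h$ happens to straddle a boundary.

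Second, I would argue that this degenerate numerical gradient is precisely what lets the optimizer stray into and terminate inside the infeasible region. The linearized constraint $\textbf{0}^T \boldsymbol{\Delta} \le 0$ is vacuously satisfied by every update direction, so the optimizer is not repelled from the boundary when it approaches from the feasible side, nor is it pulled back toward feasibility once it crosses. The KKT conditions, evaluated with these zero (or nearly zero) constraint gradients, can then be numerically satisfied at an infeasible $\textbf{x}$ just as easily as at a feasible one, causing the solver to declare convergence. I would make this concrete by sketching a one-dimensional instance $\min_{x} (x-a-\epsilon)^2$ subject to $g(x)=\mathbb{1}[x>a]=0$, on which any such optimizer driven by numerical constraint gradients will happily stop at $x=a+\epsilon$.

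The main obstacle will be that the claim is not a statement about a single well-specified algorithm but about a broad class of ``gradient-based constrained optimizers,'' each with its own safeguards (merit functions, constraint-restoration phases, trust-region filters). Rather than attempt a universal proof, I would lean on the existential quantifier in the observation and back the justification with the empirical evidence in the experimental section, which presumably shows off-the-shelf solvers returning infeasible points on the DVFS and control benchmarks. This mirrors the style of the preceding Observation~\ref{unreliable_Jacobian_obs}, where a short informal \textbf{Justification} suffices rather than a formal proof.
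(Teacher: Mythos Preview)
Your proposal is correct and its core insight---that numerical gradients of a non-differentiable (indeed, piecewise-constant) schedulability constraint give no usable directional information, so a gradient-based solver can neither avoid crossing the boundary nor detect that it has done so---is exactly what underlies the paper's justification. The difference is one of scope and detail rather than substance: the paper's entire justification is the single sentence ``The numerical gradient of non-differentiable constraints is not reliable,'' which simply invokes the preceding Observation~\ref{unreliable_Jacobian_obs} and leaves the mechanism implicit. Your version unpacks that mechanism explicitly (zero constraint Jacobian, vacuous linearized feasibility, spuriously satisfied KKT conditions, a concrete one-dimensional counterexample), which is more informative but goes well beyond what the paper actually provides. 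If you are aiming to match the paper, a one-line appeal to unreliable numerical gradients suffices; if you want a self-contained argument, your elaboration is sound and arguably more convincing.
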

\begin{justification}
The numerical gradient of non-differentiable constraints is not reliable.
\end{justification}

NMBO overcomes the challenges above by constructing local approximation models exclusively for objective functions rather than constraints. This is because schedulability constraints frequently lack differentiability, while the objective functions may not. 
If the objective function contains non-differentiable components, we use $0$ for its numerical gradient instead.
After obtaining a descent vector $\boldsymbol{\Delta}$, NMBO accepts it only if $\textbf{x}^{(k)} + \boldsymbol{\Delta}$ is feasible. Therefore, NMBO guarantees to find feasible results while maintaining the speed advantages of gradient-based optimization algorithms.

\subsection{Termination Conditions for NMBO}
NMBO will terminate if the relative difference in the objective function
\begin{equation}
    \delta_{\text{rel}} =\frac{\mathcal{F}(\textbf{x}^{(k+1)}) - \mathcal{F}(\textbf{x}^{(k)})}
    {\mathcal{F}(\textbf{x}^{(k)})} 
\end{equation}
becomes very small, \eg $\delta_{\text{rel}} \leq 10^{-5}$. NMBO will also terminate if the number of iterations exceeds a certain number (e.g., $10^3$).

\begin{Example}
\label{example_nmbo}
    Let's consider a simplified energy minimization problem:
    \begin{align}
\min_{\textbf{c}_1,  \textbf{c}_2} \ & (8\textbf{c}_1^{-1})^2 + (\textbf{c}_2^{-1})^2 \\
     \textit{subject to: } \ &
     \text{Sched}(\textbf{c}_1, \textbf{c}_2) = 0 \\
    & 4 \leq \textbf{c}_1 \leq 10 \\
    &1 \leq \textbf{c}_2 \leq 10
\end{align}
where the WCET of each task are the variables.

The task set includes two tasks: task 1 and task 2. 
Task 1's initial execution time, period, and deadline are $\{4, 10, 6\}$, respectively;
Task 2's initial execution time, period, and deadline are $\{1, 40, 40\}$, respectively;
Task 1 has a higher priority than task 2.
The schedulability analysis is based on Equation~\eqref{rta_LL} and \eqref{sched_model_ll}.


Let's consider an initial solution $\textbf{c}^{(0)} = (4, 1)$. We use LM to perform unconstrained optimization, and use $10^3$ as the initial $\lambda$ in equation~\eqref{lm_update}, then we can perform one iteration as follows:
\begin{equation}
    \textbf{J}^{(0)}
    =\begin{bmatrix}
    -8/4^2  &0\\0& -1/1^3
\end{bmatrix}
    = \begin{bmatrix}
    -0.5  &0\\0& -1
\end{bmatrix}
\end{equation}
\begin{equation}
\textbf{c}^{(1)} = \textbf{c}^{(0)} + \boldsymbol{\Delta}  = \begin{bmatrix}
    4.004 &1.001
\end{bmatrix}
\end{equation}
Multilpe iterations will be performed until $\textbf{c}^{(k)}$ becomes close to violating the schedulability constraints, \eg $(5.999, 1.499)$. At this point, NMBO cannot make big progress further without violating the schedulability constraints and so will terminate.
\end{Example}

\begin{observation}
    The schedulability analysis process in an iterative algorithm such as NMBO can often be sped up if it can utilize a ``warm start''.
\end{observation}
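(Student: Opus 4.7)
The plan is to justify the observation by exploiting two structural facts that hold for both NMBO and typical schedulability analyses: consecutive NMBO iterates are close, and most schedulability analyses are themselves iterative (fixed-point or state-space search) procedures whose runtime depends strongly on the quality of the initial state.

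First, I would argue that the iterates produced by NMBO exhibit strong temporal locality. The Levenberg--Marquardt step in Equation~\eqref{lm_update} is regulated by $\lambda$, and the trust-region machinery shrinks $\boldsymbol{\Delta}$ whenever the candidate is infeasible or fails to improve $\mathcal{F}$. As NMBO approaches termination, $\|\boldsymbol{\Delta}\|$ drops toward the tolerance $10^{-5}$, so $\textbf{x}^{(k+1)}$ differs from $\textbf{x}^{(k)}$ only marginally. Wherever continuity holds, the induced schedulability-related quantities (e.g., per-task response times $r_i$) also change only slightly across iterates, so values computed at iteration $k$ are close to those at iteration $k+1$.

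Second, I would observe that this proximity directly translates into fewer operations inside the schedulability analysis. The canonical recursion in Equation~\eqref{rta_LL} is a monotone fixed-point iteration whose convergence time scales with the distance between the seed and the fixed point; seeding it with $r_i$ from the previous NORTH iterate rather than with $c_i$ replaces many inner iterations with a small correction. The same pattern extends to richer analyses: state-space exploration methods such as those in Nasri~\etal~\cite{Nasri2019ResponseTimeAO} maintain reachable-state sets or discovered counterexamples that can be incrementally updated, and model-checking style approaches can reuse previously established invariants. I would also sketch a safety argument: for monotone fixed-point analyses, warm starting from any provable lower bound (e.g., $\max(c_i, r_i^{\text{prev}})$) preserves soundness, so the speedup costs nothing in correctness.

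The main obstacle I anticipate is that the observation is deliberately qualitative (``often'') and cannot be proved as a universal statement, because a genuinely blackbox analysis may expose no internal state to warm-start from. The careful part of the justification is therefore to delimit the class of analyses for which warm starting yields a provable speedup, namely those whose internal iteration is monotone or whose intermediate data structures (reachable sets, response-time estimates, discovered bounds) can be carried across invocations, and to note that NMBO's shrinking step behavior makes this a very common regime in practice.
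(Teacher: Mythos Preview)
Your justification is correct and matches the paper's approach: both exploit the incremental nature of NMBO's updates to argue that the previous response-time values serve as good warm starts for the next fixed-point computation in Equation~\eqref{rta_LL}. Your treatment is more detailed than the paper's brief justification---which gives only the RTA example (when $\boldsymbol{\Delta}_i>0$, the old $r_i(\textbf{c}_i)$ seeds the new fixed-point iteration) and a pointer to Davis~\etal~\cite{Davis2008EfficientES}---but the core argument is the same.
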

\begin{justification}
    Iterative algorithms typically make progress incrementally, which is beneficial for improving runtime speed in various schedulability analyses.
    For instance, consider the classical response time analysis~\eqref{rta_LL}. 
    After the execution time $\textbf{c}_i$ of $\tau_i$ increases to $\textbf{c}_i+\boldsymbol{\Delta}_i$ where $\boldsymbol{\Delta}_i >0$, $r_i(\textbf{c}_i)$ could be a warm start to analyze $r_i(\textbf{c}_i)+\boldsymbol{\Delta}_i$ during the fixed-point iterations.
    Please check Davis~\etal~\cite{Davis2008EfficientES} to learn more about this topic.
\end{justification}

\section{Variable Elimination}
\label{variable_elimination_section}

\begin{figure}[t]
\centering
\includegraphics[width=0.35\textwidth]{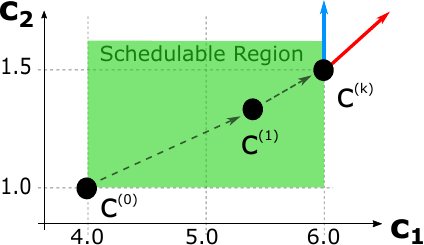}
\caption{Variable elimination motivating example.
Consider the problem in Example~\ref{example_nmbo} and assume NMBO terminates at $(5.999, 1.499)$.
The red arrow shows the update step $\boldsymbol{\Delta}$ from classical unconstrained optimizers such as equation~\eqref{lm_update}. Moving toward the red arrow will make task 1 miss its deadline.
However, we can improve $\textbf{c}^{(k)}$ without violating the schedulability constraints by only updating $\textbf{x}_2$ following $\boldsymbol{\Delta}$ while leaving $\textbf{x}_1$ unchanged (moving towards the blue arrow).
}
\label{fig:example_elimination}
\end{figure}

\subsection{Motivation and Definitions}
There are usually chances to improve performance further after NMBO terminates.
\Figref{example_elimination} shows an example from Example~\ref{example_nmbo}, where a feasible update direction (blue) is hidden in an infeasible descent vector $\boldsymbol{\Delta}$ (red).
Therefore, it is possible to optimize a subset of variables that will not immediately violate constraints, thereby enhancing the objective function. Such a subset cannot be directly obtained from NMBO because NMBO always updates all the variables simultaneously (for example, in equation~\eqref{lm_update}, all the variables are updated in each iteration). Therefore, we propose a new algorithm: variable elimination (VE).

To that end, detecting the feasible region's boundary during iterations is important.
Therefore, we generalize the definition of active constraints (originally defined for continuous functions) for the binary schedulability analysis:
\begin{definition}[Active schedulability constraint]
\label{def_active_sched_constraint}
A binary schedulability constraint, equation~\eqref{schedulability_analysis_true_false}, is an active constraint at a point $\textbf{x}^*$ if:
\begin{equation}
    \text{Sched}(\textbf{x}^*) =0
\end{equation}
\begin{equation}
    \exists \boldsymbol{\delta}, \|\boldsymbol{\delta} \| \leq h, \text{Sched}(\textbf{x}^*+\boldsymbol{\delta}) =1
     \label{larger_than_close}
\end{equation}
where $h>0$ is the numerical granularity.
\end{definition}
In other words, if a schedulability constraint is active at $\textbf{x}^*$, then $\textbf{x}^*$ is schedulable but is close to becoming unschedulable.

Upon termination of NMBO, some constraints may become active constraints and prevent NMBO from making progress.
One potential solution is removing these active constraints from the optimization problem. However, we must also guarantee that the final solutions respect these active constraints. Therefore, we can find the variables involved in the active constraints, lock their values from future optimization, and then safely ignore these active constraints in future iterations.
Based on Definition~\ref{def_active_sched_constraint}, we can find these variables based on the following method:
\begin{definition} [Dimension feasibility test]
\label{def_dimension_test}
    A solution $\textbf{x} \in \mathbb{R}^N$ for problem \eqref{general_F} passes dimension-j feasibility test of length $\textit{d}$ along the direction $\boldsymbol\Delta$ if 
    $\textbf{x} \oplus (\frac{ \boldsymbol\Delta_j}{|\boldsymbol\Delta_j|}d, j )$ is feasible, with $j \in [0, N-1]$.
\end{definition}
\noindent where the $\oplus$ operation for vectors $\textbf{x}$ and $\textbf{y}$ is defined as follows:
\begin{equation}
    \textbf{y} = \textbf{x} \oplus (d, j)  
 \ \Leftrightarrow \    \textbf{y} _i=\begin{cases}
       \textbf{x}_i, & i \neq j \\
       \textbf{x}_i + d & i=j
    \end{cases}
\end{equation}
In simpler terms, a variable, $\textbf{x}_i$, should be eliminated if it fails the dimension-$i$ feasibility test along $\boldsymbol{\Delta}$ of length $d \geq h$, where the descent direction $\boldsymbol{\Delta}$ is given by unconstrained optimizers.
After eliminating $\textbf{x}_i$ at $\textbf{x}^{(k)}$, its value will always be the same as $\textbf{x}^{(k)}_i$.
Section~\ref{section_how_to_eliminate} will introduce an adaptive strategy to find the elimination tolerance $d$.

\begin{Example}
\label{dimension_feasibility_test}
    We continue with Example~\ref{example_nmbo} and first consider an elimination tolerance $d=10^{-5}$. 
    By Definition~\ref{def_dimension_test}, $\textbf{c}^{(k)}$ passes dimension-1 (with $\textbf{c}_1$) and dimension-2 (with $\textbf{c}_2$) feasibility test. This is because both $(5.999+10^{-5}, 1.499)$ and $(5.999, 1.499+10^{-5})$ are feasible. 
    If $d=10^{-1}$, $\textbf{c}^{(k)}$ will fail the dimension-1 feasibility test (with $\textbf{c}_1$) but still pass the dimension-2 feasibility test (with $\textbf{c}_2$). 
\end{Example}

\subsection{Theoretical Analysis of Variable Elimination}
\label{section_with_decoupled_obj}
The design metric in many real-time systems is decoupled, meaning the metric can be assessed for each task individually, where a separate set of variables is associated with each task, such as the problems in Sections~\ref{energy_opt_intro} and~\ref{control_opt_intro}. 
The decoupled metric permits the following useful property: 
\begin{definition} [Strict descent step]
\label{def_strict_descent_step}
    At $\textbf{x}$, an update step $\boldsymbol{\Delta} \in \mathbb{R}^N$  is a strict descent step if
        \begin{equation}
        \forall i \in [0, N-1], \ \mathcal{F}(\textbf{x} \oplus (\boldsymbol{\Delta}_i , i)) \leq \mathcal{F}(\textbf{x})
    \end{equation}
\end{definition}

\begin{observation}
\label{strict_descent_theorem}
    If $\mathcal{F}(\textbf{x})=\sum_{i=0}^{N-1} \mathcal{F}_i(\textbf{x}_i)$, and if it is differentiable at $\hat{\textbf{x}}$, then 
    there exists $\zeta>0$ such that 
    $\boldsymbol{\Delta}= -\zeta \boldsymbol{\nabla}  \mathcal{F}$ is often a strict descent step.
\end{observation}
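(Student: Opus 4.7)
The plan is to exploit the separable structure $\mathcal{F}(\textbf{x})=\sum_i \mathcal{F}_i(\textbf{x}_i)$ so that the $i$-th partial update decouples from all other coordinates, and then apply a one-dimensional Taylor argument coordinate by coordinate to establish descent for sufficiently small $\zeta$.

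First I would unpack the definitions. Because $\mathcal{F}_i$ depends only on $\textbf{x}_i$, the $i$-th component of the gradient is $(\boldsymbol{\nabla}\mathcal{F}(\hat{\textbf{x}}))_i = \mathcal{F}_i'(\hat{\textbf{x}}_i)$, so the proposed step satisfies $\boldsymbol{\Delta}_i = -\zeta\,\mathcal{F}_i'(\hat{\textbf{x}}_i)$. Moreover, the single-coordinate update $\hat{\textbf{x}} \oplus (\boldsymbol{\Delta}_i,i)$ changes only $\mathcal{F}_i$, giving
\begin{equation}
\mathcal{F}\bigl(\hat{\textbf{x}} \oplus (\boldsymbol{\Delta}_i,i)\bigr) - \mathcal{F}(\hat{\textbf{x}}) = \mathcal{F}_i(\hat{\textbf{x}}_i + \boldsymbol{\Delta}_i) - \mathcal{F}_i(\hat{\textbf{x}}_i).
\end{equation}
This reduction is the key simplification that the separability assumption buys us.

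Next I would Taylor-expand $\mathcal{F}_i$ around $\hat{\textbf{x}}_i$, which is permitted because differentiability of $\mathcal{F}$ at $\hat{\textbf{x}}$ implies differentiability of each $\mathcal{F}_i$ at $\hat{\textbf{x}}_i$. Substituting $\boldsymbol{\Delta}_i = -\zeta\,\mathcal{F}_i'(\hat{\textbf{x}}_i)$ gives
\begin{equation}
\mathcal{F}_i(\hat{\textbf{x}}_i + \boldsymbol{\Delta}_i) - \mathcal{F}_i(\hat{\textbf{x}}_i) = -\zeta\,\bigl(\mathcal{F}_i'(\hat{\textbf{x}}_i)\bigr)^2 + o(\zeta).
\end{equation}
For any $i$ with $\mathcal{F}_i'(\hat{\textbf{x}}_i)\neq 0$, the leading term is strictly negative, so for $\zeta$ small enough the entire expression is $\leq 0$. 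For any $i$ with $\mathcal{F}_i'(\hat{\textbf{x}}_i)=0$, the step satisfies $\boldsymbol{\Delta}_i=0$ and equality holds trivially. Taking the minimum of the finitely many admissible $\zeta$ thresholds (one per coordinate) yields a single $\zeta>0$ that makes every coordinate inequality in Definition~\ref{def_strict_descent_step} hold simultaneously.

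The main obstacle, and the reason the observation is stated with the hedge ``often,'' lies in justifying the $o(\zeta)$ absorption without extra assumptions such as continuity of $\mathcal{F}_i'$ or Lipschitzness. With only one-point differentiability, the error term is controlled uniformly in $\boldsymbol{\Delta}_i$ only in a little-$o$ sense, which guarantees the existence of a valid $\zeta$ but not a quantitative lower bound. I would address this by stating the result for ``small enough'' $\zeta$ and noting that in practice the unconstrained optimizer in Equation~\eqref{lm_update} adaptively tunes step length via $\lambda$, which is precisely why the observation is phrased as usually (but not always) yielding a strict descent step in real runs of NMBO.
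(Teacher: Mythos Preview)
Your proposal is correct and follows essentially the same approach as the paper: exploit the separable structure so that a single-coordinate update only affects $\mathcal{F}_i$, then Taylor-expand to obtain $\mathcal{F}(\hat{\textbf{x}}\oplus(\boldsymbol{\Delta}_i,i)) \approx \mathcal{F}(\hat{\textbf{x}}) - \zeta(\boldsymbol{\nabla}\mathcal{F}_i(\hat{\textbf{x}}_i))^2 \leq \mathcal{F}(\hat{\textbf{x}})$. Your version is more thorough than the paper's one-line justification---in particular, you explicitly treat the zero-gradient coordinates, take a uniform $\zeta$ across all $i$, and articulate why the hedge ``often'' is warranted---but the underlying argument is identical.
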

\begin{justification}
This is based on Taylor expansion:
\begin{align*}
    \forall i, \exists \zeta>0,  \mathcal{F}(\hat{\textbf{x}} \oplus (\boldsymbol{\Delta}_i , i)) &
    \approx \mathcal{F}(\hat{\textbf{x}}) + 
    \boldsymbol{\Delta}_i \boldsymbol{\nabla}\mathcal{F}_i(\hat{\textbf{x}}_i)\\
    &= \mathcal{F}(\hat{\textbf{x}}) - \zeta (\boldsymbol{\nabla}\mathcal{F}_i(\hat{\textbf{x}}_i) )^2 \leq \mathcal{F}(\hat{\textbf{x}})
\end{align*}
\end{justification}

Next, we introduce extra symbols for the theorems below. At a feasible point $\textbf{x} \in \mathbb{R}^N$, $\boldsymbol{\Delta} \in \mathbb{R}^N$ denotes a descent direction provided by an unconstrained optimizer. Notice that $\boldsymbol{\Delta} \in \mathbb{R}^N$ may not be a feasible direction. 
$\boldsymbol{\mathcal{S}}$ denotes the set of dimension indexes $i$ that pass the dimension-$i$ feasibility test with length $d \geq h$ along $\boldsymbol{\Delta}$. 

We use $\bar{\boldsymbol{\mathcal{S}}}$ to denote the set of indices that fail the test. 
Within the context of a dimension-i feasibility test and an update step $\boldsymbol{\Delta}$, we also introduce a vector ${\textbf{v}}_{i}^{\mathcal{D}}\in \mathbb{R}^N$ where its \ith{j} element is given as follows:
\begin{equation}
     {\textbf{v}}_{i,j}^{\mathcal{D}} =\begin{cases}
        \boldsymbol{\Delta}_j & j =i\\
        0 & \text{otherwise}
    \end{cases}
\end{equation}
\begin{theorem}
    If $\boldsymbol{\Delta}$ is a strict descent step, and $|\boldsymbol{\boldsymbol{\mathcal{S}}}|=q>0$, then all the ${\textbf{v}}_i^{\mathcal{D}}$ are feasible descent vectors at $\textbf{x}$.
\end{theorem}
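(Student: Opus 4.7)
The plan is to directly verify, for each $i \in \boldsymbol{\mathcal{S}}$, the two requirements that make ${\textbf{v}}_i^{\mathcal{D}}$ a \emph{feasible descent vector} at $\textbf{x}$: (a) $\textbf{x} + {\textbf{v}}_i^{\mathcal{D}}$ lies in the feasible region of problem \eqref{general_F}--\eqref{general_inequality_constraint}, and (b) $\mathcal{F}(\textbf{x} + {\textbf{v}}_i^{\mathcal{D}}) < \mathcal{F}(\textbf{x})$. Both parts reduce to unwinding the definitions in Sections \ref{section_terminology} and \ref{section_with_decoupled_obj}, so no new analytic machinery should be required.

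First I would dispatch feasibility. By construction, ${\textbf{v}}_i^{\mathcal{D}}$ has all entries zero except the $i$-th one, which equals $\boldsymbol{\Delta}_i$; hence
\[
\textbf{x} + {\textbf{v}}_i^{\mathcal{D}} \;=\; \textbf{x} \oplus (\boldsymbol{\Delta}_i, i) \;=\; \textbf{x} \oplus \bigl(\tfrac{\boldsymbol{\Delta}_i}{|\boldsymbol{\Delta}_i|} \cdot |\boldsymbol{\Delta}_i|,\, i\bigr).
\]
Choosing the test length as $d := |\boldsymbol{\Delta}_i|$ (which is $\geq h$ for the indices admitted into $\boldsymbol{\mathcal{S}}$), Definition \ref{def_dimension_test} says exactly that the right-hand side is feasible. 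So $\textbf{x} + {\textbf{v}}_i^{\mathcal{D}}$ satisfies both the box bound \eqref{general_inequality_constraint} (because only one coordinate moves, and the move passed the test) and the schedulability constraint \eqref{schedulability_analysis_true_false}.

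Next I would handle the descent property by invoking Definition \ref{def_strict_descent_step}. Because $\boldsymbol{\Delta}$ is a strict descent step,
\[
\mathcal{F}\bigl(\textbf{x} \oplus (\boldsymbol{\Delta}_i, i)\bigr) \;\leq\; \mathcal{F}(\textbf{x})
\]
holds for every coordinate $i$, which rewritten is $\mathcal{F}(\textbf{x} + {\textbf{v}}_i^{\mathcal{D}}) \leq \mathcal{F}(\textbf{x})$. To upgrade this to the strict inequality required by the ``descent vector'' terminology, I would argue that $\boldsymbol{\Delta}$ is produced by an unconstrained optimizer invoked at a point that is not a stationary point of $\mathcal{F}$; in the decoupled setting of Observation \ref{strict_descent_theorem}, $\mathcal{F}(\textbf{x} \oplus (\boldsymbol{\Delta}_i,i))$ drops by roughly $\zeta (\boldsymbol{\nabla}\mathcal{F}_i(\textbf{x}_i))^2 > 0$ to first order, so the inequality is strict for the coordinates that genuinely contribute to the descent.

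The main obstacle is a definitional one rather than a computational one: the dimension feasibility test is parametrized by an arbitrary length $d \geq h$, while the vector ${\textbf{v}}_i^{\mathcal{D}}$ forces the step size to be $|\boldsymbol{\Delta}_i|$. Since the schedulability region may be non-convex along coordinate rays, feasibility at some length $d$ does \emph{not} in general imply feasibility at all lengths up to $d$; so the proof must identify $d$ with $|\boldsymbol{\Delta}_i|$ (rather than invoke a monotonicity argument) and implicitly assume $|\boldsymbol{\Delta}_i| \geq h$ for $i \in \boldsymbol{\mathcal{S}}$. Making this identification explicit is the delicate point; once it is fixed, the rest is a two-line check against Definitions \ref{def_dimension_test} and \ref{def_strict_descent_step}.
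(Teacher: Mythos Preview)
Your proposal is correct and follows the same route as the paper: the paper's proof is the one-liner ``Results of Definition~\ref{def_strict_descent_step} and Theorem~\ref{strict_descent_theorem},'' i.e., descent comes from the strict-descent-step definition (with Observation~\ref{strict_descent_theorem} used to upgrade $\leq$ to $<$), and feasibility is implicit from membership in $\boldsymbol{\mathcal{S}}$. You unpack exactly these ingredients, and your flag about identifying the test length $d$ with $|\boldsymbol{\Delta}_i|$ is a genuine subtlety that the paper simply glosses over.
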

\begin{proof}
    Results of Definition~\ref{def_strict_descent_step} and Theorem~\ref{strict_descent_theorem}.
\end{proof}
\begin{theorem}
\label{ve_proof}
    If both the objective function $\mathcal{F}(\textbf{x})$ and  the constraints in the optimization problem~\eqref{general_F}
    are differentiable at $\textbf{x}^{(k)}$, and $|\boldsymbol{\mathcal{S}}|=q>0$, then there exists $\bar{d} < d$ such that $\textbf{x}^{(k)}+ \boldsymbol{\hat{\Delta}}$ is feasible with no worse objective function values:
    \begin{align}
           \boldsymbol{\hat{\Delta}} \in \{ &  \sum_i^q \zeta_i \textbf{v}_i^{\mathcal{D}} \ | \
        \|\boldsymbol{\hat{\Delta}}\|_1 \leq \bar{d},\ \zeta_i \geq 0 \} 
        \label{delta_range} \\
        & \mathcal{F}(\textbf{x}^{(k)}+ \boldsymbol{\hat{\Delta}}) \leq \mathcal{F}(\textbf{x}^{(k)})
        \label{update_decrease_obj}
    \end{align}
\end{theorem}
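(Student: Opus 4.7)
The plan is to prove both the feasibility and the no-worse-objective claims by linearizing the objective and the constraints at $\textbf{x}^{(k)}$, using the single-coordinate feasibility and strict-descent sign information to show each first-order term is non-positive, and then choosing $\bar{d}$ small enough that the second-order remainders cannot spoil those inequalities.

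First I would extract, for each $i \in \boldsymbol{\mathcal{S}}$, the sign conditions $\boldsymbol{\Delta}_i \cdot \partial g / \partial \textbf{x}_i \leq 0$ for every active constraint $g$ and $\boldsymbol{\Delta}_i \cdot \partial \mathcal{F}/\partial \textbf{x}_i \leq 0$ for the objective. The first follows from differentiability of $g$ at $\textbf{x}^{(k)}$ together with the fact that the single-coordinate probe $\textbf{x}^{(k)} \oplus (\boldsymbol{\Delta}_i/|\boldsymbol{\Delta}_i| \cdot d,\, i)$ is feasible: the directional derivative of $g$ in direction $\text{sign}(\boldsymbol{\Delta}_i)\textbf{e}_i$ must be non-positive, so the signed product with $\boldsymbol{\Delta}_i$ is non-positive as well. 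The second follows from the strict-descent-step property of $\boldsymbol{\Delta}$ established in the preceding theorem combined with differentiability of $\mathcal{F}$. Inactive constraints require no further work: continuity supplies a neighborhood of $\textbf{x}^{(k)}$ in which they remain strictly inactive.

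Next, for any $\boldsymbol{\hat{\Delta}} = \sum_{i=1}^{q}\zeta_i \textbf{v}_i^{\mathcal{D}}$ with $\zeta_i \geq 0$ and $\|\boldsymbol{\hat{\Delta}}\|_1 \leq \bar{d}$, I would apply a first-order Taylor expansion at $\textbf{x}^{(k)}$. Because only the $i$-th coordinate of $\textbf{v}_i^{\mathcal{D}}$ is non-zero, the linear term collapses to $\sum_i \zeta_i \boldsymbol{\Delta}_i\,(\partial g/\partial \textbf{x}_i)$, a non-negative combination of non-positive scalars, hence $\leq 0$; the analogous sum for $\mathcal{F}$ is likewise $\leq 0$. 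The remainder is $O(\bar{d}^{\,2})$ by local boundedness of the Hessians implied by differentiability in a neighborhood, so choosing $\bar{d}$ small enough — and strictly less than $d$ — ensures both $g(\textbf{x}^{(k)}+\boldsymbol{\hat{\Delta}}) \leq 0$ for every active constraint and $\mathcal{F}(\textbf{x}^{(k)}+\boldsymbol{\hat{\Delta}}) \leq \mathcal{F}(\textbf{x}^{(k)})$, uniformly over the cone described in~\eqref{delta_range}. A single $\bar{d}$ works for all constraints because the number of schedulability and bound constraints is finite.

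The main obstacle I anticipate is translating the single-point feasibility test at length $d$ into the clean first-order inequality $\boldsymbol{\Delta}_i \cdot \partial g/\partial \textbf{x}_i \leq 0$ for active constraints: strictly speaking the test only gives $\text{sign}(\boldsymbol{\Delta}_i)\,\partial g/\partial \textbf{x}_i \leq O(d)$, so the borderline case where several active constraints are nearly tangent at $\textbf{x}^{(k)}$ forces $\bar{d}$ to be shrunk further so that the residual slack dominates curvature-induced violations along the combined direction. Managing this uniformly across all active constraints — without any analytical structure beyond pointwise differentiability — is the delicate piece of the argument, and it is precisely where the strict inequality $\bar{d} < d$ and the $\ell_1$-ball restriction in~\eqref{delta_range} are essential.
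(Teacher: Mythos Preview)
Your proposal is correct and follows exactly the approach the paper takes: the paper's own proof is the single sentence ``Proven by applying the Taylor expansion to the objective functions and constraints,'' and what you have written is precisely a detailed execution of that Taylor-expansion argument, including the per-coordinate sign extraction and the uniform shrinking of $\bar{d}$ to control remainders. Your final paragraph even anticipates the one genuine technical subtlety (that the feasibility probe only yields $\text{sign}(\boldsymbol{\Delta}_i)\,\partial g/\partial \textbf{x}_i \leq O(d)$ rather than a clean $\leq 0$), which the paper does not make explicit; your handling of it via the $\ell_1$-ball restriction and $\bar{d}<d$ is the right move.
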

\begin{proof}


Proven by applying the Taylor expansion to the objective functions and constraints. 
\end{proof}

Although the objective functions and schedulability constraints in real-time systems are not differentiable everywhere in $\mathbb{R}^N$, VE is still useful because there are many differentiable points $\textbf{x}$.
Actually, if we randomly sample a feasible point $\textbf{x}\in \mathbb{R}^N$, $\textbf{x}$ is more likely to be a differentiable point than a non-differentiable point because there are only limited non-differentiable points but infinite differentiable points.
During the optimization iterations, if $\textbf{x}^{(k)}$ is a differentiable point, then Theorem~\ref{ve_proof} states that VE can find a descent and feasible update direction to improve $\textbf{x}^{(k)}$.

\subsection{Generalization to Non-differentiable Objective Function}
In cases that NMBO terminates at a non-differentiable point $\textbf{x}^{(k)}$, we can generalize the dimension feasibility test (Definition~\ref{def_dimension_test}) as follows:
\begin{definition} [Dimension feasibility descent test]
    A solution $\textbf{x}  \in \mathbb{R}^N$ for optimization problem \eqref{general_F} passes dimension-j feasibility descent test of length $\textit{d}$ along the direction $\boldsymbol\Delta$ if 
    $\textbf{x} \oplus (\frac{ \boldsymbol\Delta_j}{|\boldsymbol\Delta_j|}d, j )$ is feasible, where $j \in [0, N-1]$ and 
\begin{equation}
    \mathcal{F}(\textbf{x} \oplus (\frac{ \boldsymbol\Delta_j}{|\boldsymbol\Delta_j|}d, j )) \leq \mathcal{F}(\textbf{x})
\end{equation}
\end{definition}
It is useful when non-sustainable response time analysis is included in the objective function.

\subsection{Performing Variable Elimination}
\label{section_how_to_eliminate}
Theorem~\ref{ve_proof} suggests that $\textbf{x}^{(k)}$ can be improved by gradient-based optimizers if $\textbf{x}_{\bar{\boldsymbol{\mathcal{S}}}}$ are transformed into constant values at $\textbf{x}^{(k)}$.

The elimination tolerance $d$ is found by first trying small values (e.g., the numerical granularity $h$). If no variables are eliminated, we can keep increasing $d$ (for example, 1.5x each time) until we can eliminate at least one variable. We use the same elimination tolerance for all the variables in our experiments for simplicity, though individual elimination tolerance can also be set up.
\begin{Example}
\label{example_ve}
Let's continue with Example~\ref{example_nmbo} and see how to select $d$ and variables ($\textbf{c}_1$ or $\textbf{c}_2$) to eliminate. 
We can first try to increase $\textbf{c}_1$ and $\textbf{c}_2$ by $d=10^{-5}$ respectively but then find that no constraints are violated. 
Therefore, $d$ is kept being increased (1.5x each time) until $d=10^{-5}\times1.5^{12}>0.001$. 
In this case, $\textbf{c}^{(k)}$ fails the dimension-1 feasibility test (with $\textbf{c}_1$) while still passing the dimension-2 feasibility test (with $\textbf{c}_2$). 
Hence, $\textbf{c}_1$ will be eliminated first.
In future iterations, $\textbf{c}_1$ will be a constant 5.999, and $\textbf{c}_2$ will become the only variable left.
\end{Example}

\subsection{Termination Condition}
The algorithm may terminate at a stationary point found by NMBO.
However, in most cases, after NMBO terminates in each iteration, VE will find new variables to eliminate due to the schedulability constraints. Eventually, NORTH will terminate after all the variables are eliminated. 

\begin{theorem}
\label{theorem_terminate}
    The number of iterations in NORTH is no larger than $N$, the number of variables.
\end{theorem}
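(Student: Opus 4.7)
The plan is to argue by a counting-style invariant on the set of ``free'' (non-eliminated) variables. Let $\textbf{x}^{(k)}$ denote the iterate at the start of iteration $k$, and let $\mathcal{V}^{(k)} \subseteq \{0,1,\ldots,N-1\}$ be the index set of variables that have not yet been eliminated by VE. Initially $|\mathcal{V}^{(0)}| = N$. I would show the monotone invariant that either (a) NORTH terminates during iteration $k$, or (b) $|\mathcal{V}^{(k+1)}| \leq |\mathcal{V}^{(k)}| - 1$. Since $|\mathcal{V}^{(k)}|$ is a non-negative integer bounded above by $N$, this immediately gives at most $N$ iterations before termination.

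The heart of the argument is justifying clause (b), which I would do by dissecting the two ways NMBO can stop inside iteration $k$. If NMBO terminates at a stationary point (equivalently, the termination condition $\delta_{\text{rel}} \leq 10^{-5}$ triggers without any active schedulability constraint blocking progress), then by the termination logic stated just before the theorem, NORTH itself terminates and we are in case (a). Otherwise, NMBO must have stalled because a feasible update has collapsed against the boundary of the feasible region; in that situation VE is invoked, and by the adaptive procedure described in Section~\ref{section_how_to_eliminate} the elimination tolerance $d$ is geometrically inflated (by the factor $1.5$) until at least one dimension $j$ fails the dimension-$j$ feasibility test along $\boldsymbol{\Delta}$. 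This inflation must eventually trigger because the variable bounds $\text{lb}_i \leq \textbf{x}_i \leq \text{ub}_i$ are compact, so a sufficiently large step along any coordinate eventually exits the feasible box; hence VE necessarily eliminates at least one index from $\mathcal{V}^{(k)}$, establishing $|\mathcal{V}^{(k+1)}| \leq |\mathcal{V}^{(k)}| - 1$.

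Combining the invariant with the base case $|\mathcal{V}^{(0)}| = N$, after at most $N$ iterations we either have already terminated via case (a) or have $|\mathcal{V}^{(N)}| = 0$, at which point there are no free variables left to optimize and NORTH terminates by the stated ``no more variables to optimize'' rule. The main obstacle I anticipate is the boundary case in which VE is asked to eliminate from an already-tight configuration: I need to confirm that ``at least one variable is eliminated per VE call'' really does hold without pathological exceptions, which is why I invoke compactness of the bound box~\eqref{general_inequality_constraint} together with Assumption~\ref{assumption_feasible}. Everything else is bookkeeping on the decreasing integer sequence $|\mathcal{V}^{(k)}|$.
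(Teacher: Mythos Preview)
Your proposal is correct and follows essentially the same counting argument as the paper: at least one variable is eliminated per iteration, so at most $N$ iterations can occur. The paper's own proof is a single sentence that simply appeals to the algorithmic design choice in Section~\ref{section_how_to_eliminate} (``the values of elimination tolerance $d$ are selected to guarantee that at least one variable will be eliminated in each iteration''), whereas you additionally justify \emph{why} such a $d$ must exist by invoking compactness of the box~\eqref{general_inequality_constraint}; this extra rigor is sound but not something the paper itself supplies.
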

\begin{proof}
    This is because the values of elimination tolerance $d$ are selected to guarantee that at least one variable will be eliminated in each iteration.
\end{proof}

\begin{Example}
    In Example~\ref{example_ve}, after eliminating $\textbf{c}_1$ at $5.999$, $\textbf{c}_2$ becomes the only left variable. In the next iteration, NMBO will run and terminate at the schedulability boundary when $\textbf{c}_2=15.89$, where  $\tau_2$'s response time $r_2=39.89 < 40$.
    The algorithm does not make further progress (\eg terminate at $\textbf{c}_2=15.99$) because the relative error difference between the last two iterations in NMBO is already smaller than $10^{-5}$ when $\textbf{c}_2=15.89$.
    
After NMBO terminates, VE proceeds to find new variables for elimination. The elimination tolerance $d=10^{-5} \times 1.5^{12}$ from Example~\ref{example_ve} is first tried but cannot eliminate any new variables.
Therefore, we keep increasing $d$ until $d=10^{-5}\times1.5^{23}>0.11$, where $\textbf{c}_2+d$ violates the scheduleability constraint. Therefore, $\textbf{c}_2$ is eliminated next.
After that, there are no variables to optimize, and so NORTH will terminate at 5.999, 15.89, which is very close to the optimal solution (6, 16).
\end{Example}


\begin{theorem}
\label{theorem_north_feasible}
Solutions found by NORTH are always feasible.
\end{theorem}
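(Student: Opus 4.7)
The plan is to proceed by induction on the iteration count $k$, showing that every iterate $\textbf{x}^{(k)}$ produced by NORTH satisfies both the schedulability constraint \eqref{schedulability_analysis_true_false} and the box constraints \eqref{general_inequality_constraint}. Since NORTH is an interleaving of NMBO steps and VE steps, it is enough to argue that each of the two component procedures preserves feasibility when started from a feasible iterate; the overall claim then follows by composing the two arguments across the at most $N$ outer iterations guaranteed by Theorem~\ref{theorem_terminate}.

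For the base case I would invoke Assumption~\ref{assumption_feasible}, which provides a feasible $\textbf{x}^{(0)}$ with $\text{Sched}(\textbf{x}^{(0)}) = 0$ and $\text{lb}_i \leq \textbf{x}^{(0)}_i \leq \text{ub}_i$. For the NMBO inductive step, I would appeal directly to the acceptance rule of Section~\ref{section_nmbo}: in each inner iteration NMBO forms a tentative step $\boldsymbol{\Delta}$ from the LM update \eqref{lm_update}, then explicitly tests $\text{Sched}(\textbf{x}^{(k)} + \boldsymbol{\Delta}) = 0$ (and the box constraints) before committing. If the test fails, the trust-region radius is reduced and $\boldsymbol{\Delta}$ is shrunk; in the limiting case where no positive-length step is acceptable, NMBO simply returns $\textbf{x}^{(k)}$ itself. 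Hence whenever NMBO updates the iterate at all, the new point is feasible by construction, and otherwise the iterate is unchanged and inherits feasibility from the inductive hypothesis.

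For the VE inductive step, the key observation is that VE does not move the current point $\textbf{x}^{(k)}$ at all: it only partitions the coordinates into ``free'' and ``eliminated'' sets, fixing each eliminated coordinate $\textbf{x}^{(k)}_i$ with $i \in \bar{\boldsymbol{\mathcal{S}}}$ at its current, feasible value. Since the full vector $\textbf{x}^{(k)}$ was feasible by the inductive hypothesis, the subproblem handed back to NMBO in the next outer iteration is initialized at a feasible point of the reduced problem; combined with the NMBO argument above, the next iterate is feasible as well. Applying this to every outer iteration and using Theorem~\ref{theorem_terminate} to bound their number by $N$, the final returned solution is feasible.

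The only subtle point I anticipate is handling the edge case where the numerical feasibility check $\text{Sched}(\textbf{x}^{(k)} + \boldsymbol{\Delta}) = 0$ passes but the box constraint is borderline; this is routine and can be absorbed by treating the box constraints as ordinary inequalities inside the same acceptance test. No step here requires differentiability of $\mathcal{F}$ or of $\text{Sched}$, so the argument goes through under the paper's blackbox assumption.
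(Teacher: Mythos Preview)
Your proposal is correct and follows essentially the same approach as the paper: the paper's proof is a terse version of exactly your induction, observing that the initial solution is feasible, that NMBO only accepts feasible updates, and that VE does not change the values of variables (and that the constraints themselves are never dropped). Your write-up is more detailed---in particular the explicit handling of the box constraints and the appeal to Theorem~\ref{theorem_terminate} are extras the paper omits---but the underlying argument is identical.
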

\begin{proof}
This can be seen by considering the iteration process shown in \Figref{main_framework_fig}:
The initial solution is feasible; NMBO only accepts feasible updates, while VE does not change the values of variables.
Besides, all the constraints are kept and remain unchanged throughout the iterations. Therefore, the theorem is proved.
\end{proof}

\subsection{Discussion and Limitations of Variable Elimination}

VE is useful when there are active scheduling constraints. In real-time systems, the optimal solution is often near the boundary of the schedulable region, and VE is, therefore, often helpful.

Another way to view VE is that variables usually have different sensitivity to schedulability constraints. Therefore, we must differentiate these variables (by dimension feasibility test) and make distinct adjustments (by variable elimination and NMBO).
In this case, VE finds the variables far from violating the schedulability constraints and allows the unconstrained optimizers to optimize these variables further.

VE sacrifices the potential performance improvements associated with the eliminated variables. However, exploiting these potential performance improvements could be complicated and computationally expensive (\eg projected gradients, see more in Noccedal~\etal~\cite{Nocedal2006NumericalO2}). Besides, the potential performance improvements are likely to be limited because NORTH showed close-to-optimal performance
in experiments.


\section{Hybrid optimization}
\label{section_hybrid_optimization}
Similar to the problem~\eqref{general_F}, we consider real-time system design problems subject to black-box schedulability constraints in this section. The major difference is that priority assignments are incorporated into optimization variables. 
Although priority assignments have been widely studied, the black-box schedulability constraint makes them significantly more difficult, and there is not much work addressing similar problems. 

As modern computation systems become more complicated, an algorithm that can automatically explore and exploit an abstract problem and gradually improve itself would be highly useful in practice. To that end, such exploration must be guided by a metric. 
Therefore, we propose to utilize \textit{response time} to guide the exploration of priority assignments to optimize a given objective function while satisfying black-box schedulability constraints. Since response time can be treated as a continuous variable (because it can have floating-point values), we can first formulate an artificial problem to optimize the response time. After that, we propose a heuristic algorithm to adjust priority assignments iteratively based on the changes in response time. Next, we provide a more formal description of this algorithm, starting with the problem definition.

\subsection{Problem Description}
The hybrid optimization problem considered in this section is given as follows:
\begin{align}
    \min_{\textbf{x}, \textbf{P}} \ \ \ \ & \  \mathcal{H}(\textbf{x}, \textbf{r}) 
    \label{general_F_discre}\\
     \textit{subject to}\ \ & \  \textbf{r}=\boldsymbol{ \mathcal{G}}(\textbf{x}, \textbf{P})\label{eq_rta_hybrid_general} \\
    & \  \forall \tau_i \in \boldsymbol{\tau}, \textbf{r}_i \leq D_i \label{schedulability_discrete_general} 
\end{align}
where the optimization variables include $\textbf{x}$ and $\textbf{P}$:
$\textbf{x}$ are the continuous variables such as the execution times or periods of tasks (Assumption~\ref{assumption_continuous}), $\textbf{P}$ are priority assignments. Given $\textbf{x}$ and $\textbf{P}$, the response time $\textbf{r}$ is decided by a black-box response time analysis~\eqref{eq_rta_hybrid_general}. Therefore, $\textbf{r}$ is not treated as \textit{free} optimization variables in the problem~\eqref{general_F_discre}, even if the objective function depends on it. We left the bounding constraints~\eqref{general_inequality_constraint} for simplicity, but it can be handled similarly to the schedulability constraint.

The priority assignments $\textbf{P}$ for a task set $\boldsymbol{\tau}$ are described by an ordered sequence of all the tasks. Tasks that appear earlier in the sequence have higher priority. The response time $\boldsymbol{\mathcal{G}}(\textbf{x}, \textbf{P})$ is a black-box function that depends on $\textbf{x}$ and $\textbf{P}$. 
An example application is given in Section~\ref{control_opt_intro}.

\begin{Example}
A priority assignment vector $\textbf{P}=\{\tau_1, \tau_2, \tau_0\}$ denotes that $\tau_1$ has the highest priority, $\tau_0$ has the lowest priority.
\end{Example}

\subsection{Optimization Framework Overview}

We generalize NORTH to NORTH+ to solve the hybrid optimization problem above. 
The overview of NORTH+ is shown in \figref{fig_overview_co_design}, where we adopt an iterative framework to optimize the continuous variables and priority assignments.
When optimizing continuous variables, the priority assignments are fixed. Similarly, the continuous variables are treated as constant when optimizing the priority assignments.

Although NORTH can optimize the continuous variables, it cannot be directly applied to optimize priority assignments. Therefore, this section introduces a new algorithm motivated by the following observation:
\begin{observation}
\label{obs_monotonic_rta_priority}
    After increasing a task $\tau_i$'s priority, its worst-case response time $r_i$ usually will not become longer.
\end{observation}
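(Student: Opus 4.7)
The plan is to first establish the statement rigorously in the classical fixed-priority preemptive setting that the paper already invokes via equation~\eqref{rta_LL}, and then to explain why ``usually'' is the strongest formulation one can make once the schedulability analyzer is treated as a black box.

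First I would fix notation: let $\text{hp}(i)$ denote the set of tasks with higher priority than $\tau_i$ before the promotion, and $\text{hp}'(i)$ the corresponding set after $\tau_i$ moves up by one rank. By the definition of a priority promotion, $\text{hp}'(i)\subseteq \text{hp}(i)$, and the two sets differ by at most one task, say $\tau_k$. I would then compare the Joseph--Pandya fixed-point recurrences
\begin{equation*}
r_i^{(m+1)} = c_i + \sum_{j\in\text{hp}(i)} \left\lceil \frac{r_i^{(m)}}{T_j}\right\rceil c_j,
\qquad
r_i'^{(m+1)} = c_i + \sum_{j\in\text{hp}'(i)} \left\lceil \frac{r_i'^{(m)}}{T_j}\right\rceil c_j,
\end{equation*}
with the common initial value $r_i^{(0)}=r_i'^{(0)}=c_i$, and show by a routine induction on $m$ that $r_i'^{(m)}\le r_i^{(m)}$ for every $m$. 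Monotonicity of the ceiling together with the removal of one non-negative interference term makes the inductive step immediate, and passing to the fixed point yields $r_i'\le r_i$. The same argument transports to the tasks whose higher-priority sets grow by the single element $\tau_i$, since the added interference term is bounded by $\lceil r/T_i\rceil c_i$ and the standard sustainability proof for preemptive RTA carries over without change.

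The main obstacle is the word ``usually.'' Certain black-box analyses admissible in equation~\eqref{eq_rta_hybrid_general}, notably the reachability-based analysis of Nasri~\etal~\cite{Nasri2019ResponseTimeAO}, are known to be non-sustainable: promoting $\tau_i$ can perturb the worst-case schedule enough to lengthen \emph{some} task's response time, occasionally even $\tau_i$'s own. I would handle this by explicitly not claiming a worst-case guarantee. Instead, I would note that the inductive comparison above extends verbatim to every analysis whose interference term is a monotone functional of $\text{hp}(i)$, which covers the large majority of analyses used in practice (classical RTA with blocking, holistic analysis, arrival-curve-based analyses), and treat the non-sustainable exceptions as motivation for using the response-time change merely as an exploration signal inside the iterative priority-assignment procedure of Section~\ref{section_hybrid_optimization}, rather than as a hard invariant on which correctness depends. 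This framing matches the informal ``Justification'' style already used for the earlier observations in the paper.
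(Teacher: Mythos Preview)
Your proposal is correct, but it is far more elaborate than what the paper actually does. The paper offers no \textbf{Justification} block for this observation at all; it simply follows the statement with one sentence noting that this monotonic relationship ``usually holds'' and pointing to the fact that it is one of the three conditions under which Audsley's algorithm is applicable, citing the survey of Davis~\etal~\cite{davis2016review}. In other words, the paper treats the observation as folklore supported by a literature pointer, whereas you give an explicit inductive comparison of the Joseph--Pandya fixed-point iterates under $\text{hp}'(i)\subseteq\text{hp}(i)$ and then carefully account for the ``usually'' qualifier via non-sustainable analyses. Your route buys an actual self-contained argument (useful if a reader has never seen the Audsley conditions), while the paper's route buys brevity and avoids re-deriving a standard fact.

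One minor remark: the sentence in your plan about ``the same argument transports to the tasks whose higher-priority sets grow by the single element $\tau_i$'' is tangential and slightly misleading---for those tasks the response time can \emph{increase}, not decrease, so there is nothing to transport. Since the observation concerns only $r_i$ itself, you can simply drop that sentence without loss.
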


The observation above establishes a monotonic relationship that usually holds between priority assignments and response time. For example, it is one of three conditions on the schedulability analysis that can be combined with Audsley's algorithm, an efficient priority assignment procedure~\cite{davis2016review}. 
Therefore, since response times $\textbf{r}$ are continuous variables, we can first fix the values of the priority assignment variables $\textbf{P}$ and obtain an update step $\boldsymbol{\Delta}\textbf{r}$ on the response times, then adjust priority assignments to make $\boldsymbol{\Delta}\textbf{r}$ happen.

\begin{figure}[t]
\centering
\includegraphics[width=0.36\textwidth]{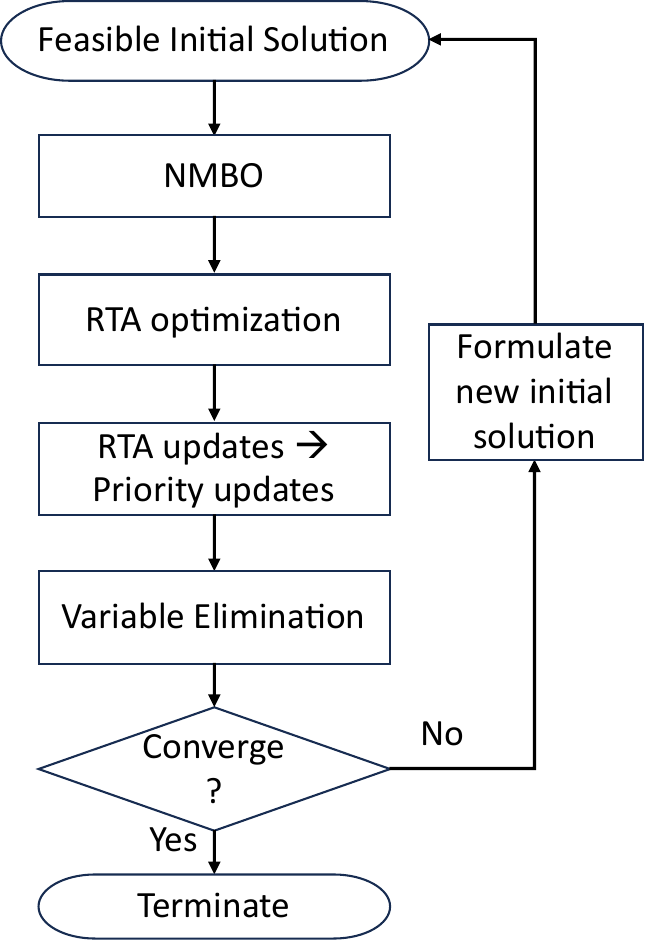}
\caption{Overall optimization framework NORTH+. In each iteration, the NMBO and VE components are shown in Figure~\ref{fig:main_framework_fig}, and the other two steps are introduced in this section.
The execution order of each step within an iteration is based on the following considerations: (1) variable elimination (VE) should be the last step because it reduces the variable space; (2) Numerical method-based optimization (NMBO) and priority assignments (RTA optimization and priority updates) optimize different types of variables, and so their execution order can be interchanged.}
\label{fig:fig_overview_co_design}
\end{figure}

Next, we introduce the major steps of NORTH+ in each iteration. 

\subsection{Step 1: Optimize the continuous variables with NORTH}
Since NORTH can optimize continuous variables, we first treat priority assignments $\textbf{P}$ as constants and optimize the continuous variables with NORTH:
\begin{align}
\textbf{x}^{(k+1), NORTH}  & \ \ =     \min_{\textbf{x}} \  
 \mathcal{H}(\textbf{x},  \textbf{r}) 
    \label{general_F_discre_north_step}\\
     \textit{subject to}\ \ & \  \textbf{r}=\boldsymbol{ \mathcal{G}}(\textbf{x},  \textbf{P})\Big|_{\substack{\textbf{P} = \textbf{P}^{(k)}}} \\
    & \  \forall \tau_i \in \boldsymbol{\tau}, \textbf{r}_i \leq D_i \label{schedulability_discrete_general_north_step} 
\end{align}
where $\boldsymbol{ \mathcal{G}}(\textbf{x}, \textbf{P})\Big|_{\substack{\textbf{P} = \textbf{P}^{(k)}}}$ denote that the priority assignment $\boldsymbol{P}$ is treated as a constant (specifically, its initial value at the start of each iteration) within the function $\mathcal{G}$. Since problem~\eqref{general_F_discre_north_step} does not involve discrete variables, we can utilize NORTH to solve it easily.

\subsection{Step 2: Optimize the Response Time Vector}
Since it is not easy to directly optimize priority assignments $\textbf{P}$ (because $\textbf{P}$ are discrete variables), we first solve an artificial unconstrained optimization problem:

\begin{align}
   \textbf{r}^{(k+1), art} & = \min_{\textbf{r}}  \ \ \ \ \  \mathcal{Z}(\textbf{r}  ,  \textbf{x})
    \conditionX \\
    & = \mathcal{H}(\textbf{r}  , \textbf{x})\conditionX- w\sum_{\tau_i \in \boldsymbol{\tau}} \textbf{Barrier}(D_i-\textbf{r}_i)
    \label{obj_with_barrier}\\
    \textit{subject to}\ \ & \ \ \ \ \ \ \ \ \ \ \ \ \ \ \ \ \ \ \ \ \ \ \ \ \ \ \ \ \ \ \ \   \emptyset
\end{align}
where $\mathcal{H}(\textbf{r}  ,  \textbf{x})\conditionX$ and $\mathcal{Z}(\textbf{r}  , \textbf{x})\conditionX$ denote that $\textbf{x}$ is treated as a constant variable in the functions. Different from problem~\eqref{general_F_discre}, $\textbf{P}$ is not considered in this step, and $\textbf{r}$ is the only optimization variable in this artificial problem~
\eqref{obj_with_barrier}. The major purpose of solving such an artificial problem is finding update directions for $\textbf{P}$.
We will add $\textbf{P}$ back in the next step.
$w>0$ controls the relative weight of the barrier function~\cite{Boyd2006ConvexO}:
\begin{equation}
    \textbf{Barrier}(x)=\begin{cases}
        \log(x), & x>0\\
        -\infty, & \text{otherwise}
    \end{cases}
\end{equation}
The transformed objective function above is safe because a good solver minimizes the objective function and avoids unschedulable situations.


The problem above can be solved by popular numerical optimizers such as LM (Equation~\eqref{lm_update}). We use gradient descent algorithm to solve it in this section for presentation simplicity. The gradient evaluation of the objective function~\eqref{obj_with_barrier} and the update step can be derived as follows:
\begin{equation}
    \boldsymbol{\nabla}_{\textbf{r}}  \mathcal{Z}
    =  \boldsymbol{\nabla}_{\textbf{r}} \mathcal{H}( \textbf{r} ,\textbf{x})\conditionX - w\sum_{\tau_i \in \boldsymbol{\tau}}\frac{1}{D_i-\textbf{r}_i}
    \label{eq_north_plus_gradient}
\end{equation}
\begin{equation}
    \boldsymbol{\Delta} \textbf{r} = - \eta \boldsymbol{\nabla}_{\textbf{r}}  \mathcal{Z}
\end{equation}
where $\eta$ controls the update step during iterations.
 
\subsection{Step 3: Optimize Priority Assignments }
Given an initial feasible priority assignment, this section presents a heuristic algorithm to adjust priority assignments to optimize $\mathcal{Z}(\textbf{r},  \textbf{x})\conditionX$ following $\textbf{r}^{(k+1), art}$. The hope is that we can change $\textbf{r}^{(k)}$ into $\textbf{r}^{(k+1), art}$ as much as possible by adjusting priority assignments $\textbf{P}^{(k)}$. 

The priority assignments are adjusted in an iterative way where we change one task's priority each time. 
At first, all the tasks are pushed into a set $\boldsymbol{\mathcal{S}}^{todo} = \boldsymbol{\tau}$, which contains the tasks waiting to adjust priority assignments. 
Within each iteration, we pick up one task from $\boldsymbol{\mathcal{S}}^{todo}$ and try to increase its priority until doing so cannot improve the objective function. 
The pseudocode of this idea is shown in Algorithm~\ref{alg_pa_opt}. 
The most important factor that decides the success of the algorithm is how to select tasks to increase its priority (the ``potential'' evaluation function in line \ref{alg2_pick_task_with_potential} in Algorithm~\ref{alg_pa_opt}), which is explained in Section~\ref{section_potential_hybrid_opt}.

\subsection{How to select tasks to increase its priority?}
\label{section_potential_hybrid_opt}
If the schedulability analysis is time-consuming, Algorithm~\ref{alg_pa_opt} may waste lots of time performing unnecessary schedulability analysis. Therefore, we propose two heuristics (Observation~\ref{obs_pa1} and \ref{obs_increase_pr_hash_table}) to decide how to select tasks that can improve the objective function.

We first introduce some notations. We use $\textbf{P}^{(k)}$ to denote the priority assignment at \ith{k} iteration, use $\textbf{P}^G$ to denote a priority assignment vector derived from $|\boldsymbol{\Delta} \textbf{r}|$ (Note the absolute value sign): tasks with bigger response time changes should have higher priority. 
Given a priority assignment vector $\textbf{P}$, $\textbf{P}(\tau_i)$ denotes the priority order of the task $\tau_i$. For example, the highest priority task's priority order is 1, and the lowest priority task's priority order is $N$ if there are $N$ tasks. 

\begin{Example}
\label{example_north_plus_intro}
    Let's consider the task set in Example~\ref{example_nmbo} except that we are optimizing task periods with the following objective function
    \begin{equation}
        \min_{\textbf{T}, \textbf{P}} T_1+T_2+r_1(\textbf{T},\textbf{P})+r_2(\textbf{T},\textbf{P})
    \end{equation}
    
    \noindent Furthermore, at the \ith{k} iteration, assume $\textbf{T}^{(k)}_1=10, \textbf{T}^{(k)}_2=6$. 
    When optimizing the period variables, we assume that the system follows the implicit deadline policy, which means that the deadline is always the same as the period variable. 
    The initial priority vector $\textbf{P}^{(k)}=\{1,2\}$, which means task 1 has a higher priority than task 2. 
    Assume the weight parameter $w$ in objective function~\eqref{obj_with_barrier} is $1$, then we have $\textbf{r}_1=4, \textbf{r}_2=5$, $\boldsymbol{\nabla}_{\textbf{r}}\mathcal{Z}^{(k)}=[1.25, 2]^T$,
    $\boldsymbol{\Delta} \textbf{r} = [-1.25, -2]^T$ (assume using a simple gradient descent algorithm and only perform one iteration),
    $\textbf{P}^G(\tau_1)=2$, $\textbf{P}^G(\tau_2)=1$.
\end{Example}

\begin{observation}
\label{obs_pa1}
Increasing a task $\tau_i$'s priority is more likely to improve the objective function if $\textbf{P}^G(\tau_i) < \textbf{P}^{(k)}(\tau_i) $
\end{observation}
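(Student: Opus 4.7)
The plan is to give a justification in the style the paper uses for observations, built around the alignment between gradient descent on the artificial objective $\mathcal{Z}$ (Step 2) and the discrete priority-adjustment heuristic (Step 3). The core idea I would stress is that $\textbf{P}^G$ already encodes first-order gradient information: since $\boldsymbol{\Delta}\textbf{r} = -\eta \boldsymbol{\nabla}_{\textbf{r}}\mathcal{Z}$ is a descent direction for the artificial objective at the current iterate, the magnitude $|\boldsymbol{\Delta}\textbf{r}_i|$ measures how sensitive $\mathcal{Z}$ is to $\tau_i$'s response time, and sorting tasks by this magnitude into $\textbf{P}^G$ ranks them by how urgently the gradient wants each $\textbf{r}_i$ to move.

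First, I would make the sensitivity interpretation explicit via a first-order Taylor argument: to leading order in a small response-time perturbation $\delta_i$ of a single task, the change in $\mathcal{Z}$ is $(\partial \mathcal{Z}/\partial \textbf{r}_i)\,\delta_i$, and the magnitude of this partial derivative is exactly what determines $\tau_i$'s rank in $\textbf{P}^G$ via equation~\eqref{eq_north_plus_gradient}. In the typical regime for the target applications (for example, the control objective of Section~\ref{control_opt_intro} combined with the deadline-barrier term of equation~\eqref{obj_with_barrier}), the gradient pushes response times downward, so a larger $|\boldsymbol{\Delta}\textbf{r}_i|$ means more objective improvement per unit reduction in $\textbf{r}_i$.

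Next, I would invoke Observation~\ref{obs_monotonic_rta_priority}, which says that raising $\tau_i$'s priority typically does not lengthen $\textbf{r}_i$ and usually shortens it. Hence a priority boost is most valuable when applied to tasks whose rank in $\textbf{P}^G$ is strictly higher (numerically smaller) than their current rank $\textbf{P}^{(k)}(\tau_i)$. The condition $\textbf{P}^G(\tau_i) < \textbf{P}^{(k)}(\tau_i)$ is precisely the statement that the gradient-derived ranking wants $\tau_i$ promoted above its current position, so the heuristic move of increasing $\tau_i$'s priority is aligned with a descent direction of $\mathcal{Z}$. When the inequality fails, the gradient offers no first-order evidence in favor of such a boost, and the move is just as likely to hurt as to help.

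The main obstacle, and the reason the claim must be phrased as \emph{more likely} rather than as a guarantee, is that priority changes are discrete and non-local. Bumping $\tau_i$ upward shortens its response time but may lengthen that of the task it overtakes, and these coupled, non-differentiable jumps in $\boldsymbol{\mathcal{G}}$ are not captured by a first-order analysis at the current $\textbf{r}$. Because $\boldsymbol{\mathcal{G}}$ is a black box, no quantitative bound on this coupling is available, so the justification must remain at the level of an alignment heuristic: acting on tasks whose $\textbf{P}^G$-rank exceeds their current rank is the discrete move most consistent with the continuous gradient on $\mathcal{Z}$, and in the absence of finer structural information about the response-time analysis, this is the best first-order decision rule one can formulate.
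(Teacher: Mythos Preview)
Your proposal is correct and follows essentially the same reasoning as the paper's justification: tasks with larger gradient magnitude (hence higher rank in $\textbf{P}^G$) have more influence on the objective and therefore deserve higher priority, so promoting $\tau_i$ is sensible precisely when its gradient-based rank outstrips its current rank. The paper's own justification is a two-sentence sketch of this same idea, whereas you flesh it out with the first-order Taylor argument, the explicit link to Observation~\ref{obs_monotonic_rta_priority}, and a discussion of why the claim is only probabilistic---all of which is consistent with, but more detailed than, what the paper provides.
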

\begin{justification}
    The basic idea of the observation is that a task with a bigger gradient should be assigned with a higher priority because it has more influence on the objective function. Therefore, a task with a smaller gradient should not be assigned with higher priority.
\end{justification}
\begin{Example}
    Continue with Example~\ref{example_north_plus_intro}. There is no need to increase $\tau_1$'s priority even if $\tau_1$ does not have the highest priority because its gradient is smaller than $\tau_2$'s.
\end{Example}
\begin{observation}
\label{obs_increase_pr_hash_table}
    If increasing the priority of a task $\tau_i$ from its current priority $j$ fails to improve the objective function, then it is highly likely that the same will hold even if other tasks' priority assignments change in future iterations.
\end{observation}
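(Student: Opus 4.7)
The plan is to justify this observation through a locality argument paired with a sensitivity analysis of the artificial objective $\mathcal{Z}(\textbf{r},\textbf{x})$ with respect to priority changes. First, I would characterize exactly which response times are affected when $\tau_i$ is promoted from priority $j$ to a higher priority $j' < j$. By Observation~\ref{obs_monotonic_rta_priority}, $\textbf{r}_i$ weakly decreases, while the tasks bumped down from priorities $j', j'+1, \dots, j-1$ to $j'+1, \dots, j$ typically see their response times weakly increase; tasks outside this priority window are essentially unaffected because the set of higher-priority interferers for each of them is unchanged. This gives a localized influence set $\mathcal{A}(\tau_i,j)$, and the first-order change in the objective can be written as $\Delta \mathcal{Z} \approx \sum_{k \in \mathcal{A}(\tau_i,j)} \nabla_{\textbf{r}_k}\mathcal{Z}\cdot \Delta \textbf{r}_k$, whose sign determines whether the promotion helps.

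Next I would argue that when other tasks $\tau_\ell$ (with $\ell \neq i$) have their priorities adjusted in later iterations, the sign of $\Delta \mathcal{Z}$ for the original promotion tends to persist. The reason is twofold: the interference pattern within the window $\mathcal{A}(\tau_i,j)$ is determined mainly by which tasks occupy those priority slots, and subsequent priority edits in Algorithm~\ref{alg_pa_opt} either act outside this window (leaving $\Delta \textbf{r}_k$ unchanged for $k \in \mathcal{A}(\tau_i,j)$) or modify the magnitudes but not the signs of the $\Delta \textbf{r}_k$ entries. At the same time, the gradient components $\nabla_{\textbf{r}_k}\mathcal{Z}$ depend on $\textbf{x}$ and on the current response time vector through the smooth objective $\mathcal{H}$ and barrier term, and since NORTH+ produces only incremental updates between iterations, these gradients drift slowly.

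The main obstacle is that this is fundamentally a heuristic claim rather than a worst-case theorem: one can construct priority configurations in which a distant change reroutes interference chains and flips the sign of $\Delta \mathcal{Z}$ for the promotion of $\tau_i$. I would therefore frame the justification as an expected-case argument in the spirit of the justifications given for Observation~\ref{obs_pa1} and Observation~\ref{strict_descent_theorem}, rather than a formal proof. The practical payoff I would emphasize is that this observation enables caching of failed promotions in a hash table keyed by $(\tau_i, j)$, which in turn cuts the number of expensive black-box schedulability queries in Algorithm~\ref{alg_pa_opt} without materially sacrificing solution quality in experiments.
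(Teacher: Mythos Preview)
Your proposal is considerably more elaborate than what the paper actually does. In the paper, this observation receives no formal justification at all: there is no \emph{Justification} block following it (unlike Observations~\ref{obs_monotonic_rta_priority} and~\ref{obs_pa1}). Instead, the paper immediately moves to the implementation, explaining that the observation can be realized via a hash map keyed on $(\tau_i, j)$ with a failure threshold, and then states that ``based on experiment results, maintaining the same hash map across iterations improves the overall running speed without sacrificing the overall performance.'' In other words, the paper treats this purely as an engineering heuristic validated empirically, not as a claim to be argued.

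Your locality and sensitivity argument is a genuinely different and more ambitious route. The decomposition into an influence set $\mathcal{A}(\tau_i,j)$, the first-order expansion $\Delta\mathcal{Z} \approx \sum_k \nabla_{\textbf{r}_k}\mathcal{Z}\cdot\Delta\textbf{r}_k$, and the slow-drift argument for the gradient components are all reasonable as intuition, and you are right to flag that it cannot be made into a worst-case theorem. What your approach buys is an explanatory story for \emph{why} the hash-map caching should work; what the paper's approach buys is simplicity and direct empirical grounding. If you intend to include your argument, be aware that it goes well beyond the paper's own treatment, and you should present it explicitly as supplementary intuition rather than as reconstructing the authors' reasoning.
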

The second observation can be implemented via a hash map, which records the number of failed priority adjustments when $\tau_i$ is assigned the \ith{j} highest priority. If the failed adjustments exceed a threshold (1 in our experiments), we will not try to increase $\tau_i$'s priority to be the \ith{j} highest priority. Based on experiment results, maintaining the same hash map across iterations improves the overall running speed without sacrificing the overall performance.

\begin{theorem}
    Results found by NORTH+ always respect all the constraints in problem~\eqref{general_F_discre}.
\end{theorem}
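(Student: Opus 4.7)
The plan is to prove the theorem by induction on the iteration index $k$ of NORTH+, showing that every intermediate solution $(\textbf{x}^{(k)}, \textbf{P}^{(k)})$ maintained by the algorithm satisfies both constraint~\eqref{eq_rta_hybrid_general} and constraint~\eqref{schedulability_discrete_general}. The base case follows from Assumption~\ref{assumption_feasible}, which guarantees that NORTH+ starts from a feasible $(\textbf{x}^{(0)}, \textbf{P}^{(0)})$. The response time equality constraint~\eqref{eq_rta_hybrid_general} is never an obstacle by itself, because $\textbf{r}$ is not a free variable in problem~\eqref{general_F_discre}: we always recompute $\textbf{r} = \boldsymbol{\mathcal{G}}(\textbf{x}, \textbf{P})$ from the current $(\textbf{x}, \textbf{P})$, so~\eqref{eq_rta_hybrid_general} holds by construction. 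Hence the real work is showing that $\forall \tau_i, \textbf{r}_i \leq D_i$ is preserved across each of the three steps inside a single iteration.

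For the inductive step, I would analyze the three substeps of NORTH+ in the order they appear in Figure~\ref{fig:fig_overview_co_design}. First, Step~1 invokes NORTH on problem~\eqref{general_F_discre_north_step} with $\textbf{P}$ frozen at $\textbf{P}^{(k)}$; by Theorem~\ref{theorem_north_feasible}, NORTH's output $\textbf{x}^{(k+1), NORTH}$ satisfies the embedded schedulability constraint~\eqref{schedulability_discrete_general_north_step}, so $(\textbf{x}^{(k+1), NORTH}, \textbf{P}^{(k)})$ is feasible in problem~\eqref{general_F_discre}. Second, Step~2 solves the artificial unconstrained problem~\eqref{obj_with_barrier}, which produces only a \emph{target} vector $\textbf{r}^{(k+1), art}$ used downstream to guide priority updates; crucially, Step~2 modifies neither $\textbf{x}$ nor $\textbf{P}$, so the maintained solution remains the feasible pair from Step~1. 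Third, Step~3 invokes Algorithm~\ref{alg_pa_opt} to adjust priorities one task at a time; the argument here is that the algorithm only commits to a priority swap when it strictly improves the objective and leaves the task set schedulable, which can be verified by a single blackbox call to $\text{Sched}$ after each tentative swap. Any swap that violates~\eqref{schedulability_discrete_general} is simply rejected, so the priority vector $\textbf{P}^{(k+1)}$ produced by Step~3 still yields $\textbf{r}_i \leq D_i$ for all $i$.

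Finally, the VE component at the end of the iteration only freezes certain continuous variables at their current (feasible) values and removes them from the active variable set; it does not alter any $\textbf{x}_i$ or $\textbf{P}_i$, so feasibility is trivially preserved. Combining the three substeps and VE, the inductive hypothesis propagates from $(\textbf{x}^{(k)}, \textbf{P}^{(k)})$ to $(\textbf{x}^{(k+1)}, \textbf{P}^{(k+1)})$, and since NORTH+ terminates by returning its last maintained iterate, the returned solution is feasible.

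The main obstacle I expect is pinning down the exact acceptance criterion inside Step~3, because Algorithm~\ref{alg_pa_opt} is described at a high level and its feasibility guarantee hinges on an implicit ``reject if unschedulable'' check. Making this rigorous requires a small lemma that each call to the inner priority-swap routine either leaves $\textbf{P}$ unchanged or replaces it with a schedulable alternative; once that lemma is stated, the rest of the argument is a straightforward induction. Everything else (Step~1's guarantee via Theorem~\ref{theorem_north_feasible}, Step~2's non-modification of variables, VE's value-locking behavior) is already established by earlier results in the paper.
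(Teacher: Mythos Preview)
Your proposal is correct and follows the same core idea as the paper's proof: feasibility is an invariant maintained across iterations because each update (to continuous variables via NORTH, or to priorities via Algorithm~\ref{alg_pa_opt}) is accepted only when it keeps the system schedulable, and VE merely freezes values without changing them. The paper's actual proof is a two-sentence sketch that simply asserts this acceptance rule and appeals to Theorem~\ref{theorem_north_feasible}; your version is a more careful induction that unpacks each of the three substeps and explicitly notes that Step~2 does not touch $(\textbf{x},\textbf{P})$, which the paper leaves implicit. The obstacle you flag about Step~3's acceptance criterion is real at the level of the pseudocode (the \texttt{if} test in Algorithm~\ref{alg_pa_opt} compares only $\mathcal{H}$ values), and the paper handles it exactly the way you anticipate: by asserting, without further argument, that updates are accepted only if they satisfy all constraints.
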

\begin{proof}
    Similar to Theorem~\ref{theorem_north_feasible}, in each iteration, a new update in either the continuous variables or priority assignments is accepted only if it satisfies all the constraints and improves the objective function. Therefore, the final result always satisfies all the constraints.
\end{proof}

\subsection{Termination and convergence}
The termination conditions of NORTH+ are analyzed on the continuous variables and priority assignments separately. Iterations on continuous variables will terminate when all the variables are eliminated by Variable Elimination (VE). As for priority assignments, the iterations will terminate when no tasks are tried to increase its priorities based on Observation~\ref{obs_increase_pr_hash_table}.

The weight parameter $w$ in the objective function~\eqref{obj_with_barrier} can be set up following the classical interior point method. The initial $w$ parameter could have a relatively large value ($10^7$ in our experiments); during iterations, $w$ decreases by half after finishing one iteration every time. Eventually, $w$ will converge to 0 such that the transformed objective function~\eqref{obj_with_barrier} will converge to the actual objective function.

    
    


\subsection{Implementation Details}
The pseudocode for the adjusting priority assignments is shown in Alg~\ref{alg_pa_opt}. In line~\ref{alg_sort_tasks}, the tasks are sorted based on the absolute value of $\Delta \textbf{r}$, from the largest to the smallest. This is because prioritizing adjusting tasks with larger $|\Delta \textbf{r}|$ values is more likely to improve the objective function. In line~\ref{alg_line_change_priority_by_one}, if the $\Delta \textbf{r}_i$ is positive, we need to increase $\tau_i$'s priority; otherwise, we should decrease $\tau_i$'s priority. If one task $\tau_i$'s priority is increased by one, then the lowest priority task with higher priority than $\tau_i$ will have its priority decreased accordingly. For example, if the initial priority assignment vector from the highest to lowest is $\{\tau_0, \tau_1, \tau_2\}$, after increasing $\tau_2$'s priority by one, it becomes $\{\tau_0, \tau_2, \tau_1\}$.

\begin{algorithm}[ht!]
\SetAlgoLined
\SetKwInOut{a}{b}
\caption{Optimize priority assignments (One iteration)}
\label{alg_pa_opt}
\KwIn{ $\textbf{x}^{(k)}$, $\boldsymbol{P}^{(k)}$, expected response time change vector $\boldsymbol{\Delta \textbf{r}}$,  failing attempts record $\mathcal{M}(\tau_i, p) = \emptyset$, failing attempt threshold $\Theta^{\mathcal{M}}$}
\KwOut{$\boldsymbol{P}^{(k+1)}$}
\tcp{Add all tasks into a candidate list}
$\boldsymbol{\mathcal{S}}^{todo} = \boldsymbol{\tau}$ \\

\tcp{Sort the tasks in $\boldsymbol{\mathcal{S}}^{todo}$ based on the value of $|\boldsymbol{\Delta} \textbf{r}|$, largest to smallest}
$\boldsymbol{\mathcal{S}}^{todo} = \textbf{Sort}(\boldsymbol{\mathcal{S}}^{todo}, \boldsymbol{\Delta} \textbf{r})$ \label{alg2_pick_task_with_potential} \label{alg_sort_tasks} \\

\tcp{First consider the task with the most potential}
\For{$\tau_i$ in $\boldsymbol{\mathcal{S}}^{todo}$}
{       
    \tcp{If we have not failed much in trying to increase $\tau_i$'s priority before, we can try to increase $\tau_i$'s priority.}
    \While{$\mathcal{M}(\tau_i, \textbf{P}^{(k)}(\tau_i)) \leq \Theta^{\mathcal{M}}$} 
    { 
        $\boldsymbol{P} = \textbf{ChangeTaskPriorityByOne}(\boldsymbol{P}^{(k)}, \tau_i$, $\boldsymbol{\Delta}\textbf{r}_i$ ) \label{alg_line_change_priority_by_one} \\
        \eIf{ $\mathcal{H}(\xkth, \textbf{r}(\xkth, \boldsymbol{P})) < \mathcal{H}(\xkth, \textbf{r}(\xkth, \Pkth))$ }
        {
            $ \Pkth = \boldsymbol{P}$\\
        }{
            $\mathcal{M}(\tau_i,  \Pkth (\tau_i)) += 1$
        }
    }
    \textbf{RemoveTask}($\boldsymbol{\mathcal{S}}^{todo}$, $\tau_i$)
}
$\boldsymbol{P}^{(k+1)}=\boldsymbol{P}^{(k)}$
\end{algorithm}

In step 2, we need to optimize the response time $\textbf{r}$ to solve the artificial problem~\eqref{obj_with_barrier}. Typically, we can only adjust priority assignments to change response time from $\textbf{r}^{(k)}$ to $\textbf{r}^{(k+1), NORTH}$ up to a certain precision. Therefore, there is no need to solve the artificial optimization problem~\eqref{obj_with_barrier} with high precision, such as terminating iterations only when the relative difference between iterations becomes smaller than $10^{-5}$. In experiments, to improve run-time speed, we directly use the gradient vector as $\boldsymbol{\Delta} \textbf{r}$, which is essentially equivalent to solving the problem~\eqref{obj_with_barrier} with gradient descent and only performing one iteration.

\section{Run-time complexity estimation}
\label{run_time_complexity}
The run-time complexity of NORTH is analyzed as follows:
\begin{equation}
\label{eq_run_time_compexity}
    \text{Cost}_{NORTH}=N_{\text{Elimi}} \times (\text{Cost}_{\text{TR}} + \text{Cost}_{\text{Elimi}})
\end{equation}
where $N_{\text{Elimi}}$ denotes the number of elimination loops, $\text{Cost}_{\text{TR}}$ denotes the cost of the trust-region optimizer, and $\text{Cost}_{\text{Elimi}}$ denotes the cost of variable elimination. Let's use $N$ to denote the number of variables and analyze each of these terms.

Firstly, following Theorem~\ref{theorem_terminate}, we have
\begin{equation}
    N_{\text{Elimi}} \leq N
\end{equation}

Let's use $\text{Cost}_{\text{sched}}$ to denote the cost for schedulability analyses, $\text{N}_{\text{TR}}$ for the number of trust-region iterations, then:
\begin{equation}
    \text{Cost}_{\text{TR}} \leq N_{\text{TR}} \times (\text{Cost}_{\text{sched}} + O(N^3))
    \label{bound_tr}
\end{equation}
where $\text{N}_{\text{TR}}$ mostly is decided by the convergence rate of the gradient-based optimizer, which is often very fast, \eg super-linear or quadratic convergence rate~\cite{Nocedal2006NumericalO2}.
The $O(N^3)$ term above denotes the cost of matrix computation. 

As for the last term in Equation~\eqref{eq_run_time_compexity}:
\begin{equation}
    \text{Cost}_{\text{Elimi}} \leq N \times \text{Cost}_{\text{sched}}
\end{equation}

The run-time complexity of NORTH+ has an extra term regarding adjusting priority assignments:
\begin{equation}
    \text{Cost}_{NORTH+}=\text{Cost}_{NORTH} + N_{\text{Elimi}} \times O(N^2) \times \text{Cost}_{Sched}
\end{equation}
where the term $O(N^2)$ considers the maximum number of times each task can increase its priority.

\section{Application and generalizations}
\label{applications}
In this section, we discuss how to relax the two assumptions~\ref{assumption_feasible} and~\ref{assumption_continuous} and limitations.

\subsection{Finding feasible initial solutions}
\label{initial_solution_section}
The optimization variables in real-time systems usually have realistic meanings (such as period or execution time). 
In these cases, feasible initial solutions can be found by setting them the values that are more likely to be schedulable, \eg the shortest execution time/longest period. 
This method is optimal in finding initial solutions if the schedulability analyses are sustainable, such as the schedulability analysis based on the response time analysis~\eqref{rta_LL}. Please see more results in Section~\ref{section_exp_initial_solution}.
Alternatively, the feasible initial solution can be found by solving a phase-1 optimization problem~\cite{Boyd2006ConvexO}.
This method is applicable when the influence of violating schedulability constraints can be obtained quantitatively. 

\subsection{Optimizing Categorical Variables}
\label{optimize_discrete_variables}

Categorical variables are variables that represent data divided into categories or groups.
NORTH can optimize categorical variables if there is a rounding strategy to transform these variables into and back continuous variables. 
Variables that satisfy such requirements could be integer periods or run-time frequencies that can only be selected within a discrete set. In these cases, performing variable rounding is usually not difficult.
For example, if all the constraints are monotonic (in cases of schedulability constraints, that means they are sustainable), we can round floating-point variables into integers as follows: 
If rounding down the variable does not adversely affect system schedulability, we may do so to ensure schedulability; otherwise, we round it up.

Some schedulability analyses only take categorical variables as input, such as \cite{Nasri2019ResponseTimeAO} when optimizing periods (Although it is possible to use arbitrary float-point numbers as periods, the hyper-period would become very large and significantly increase the computation cost). In this case, continuous variables have to be rounded into categorical variables before performing the schedulability analysis.
This method is used in the NORTH+ experiment in Section~\ref{exp_north_control}.

\subsection{Potential to Optimizing Other Discrete Variables}
In addition to the categorical variables and priority assignments discussed previously, NORTH+ can also be employed to co-optimize continuous and some other discrete variables. This can be achieved through a similar iterative approach as described in Section~\ref{section_hybrid_optimization}, where continuous and discrete variables are optimized separately. However, this process requires an algorithm dedicated to optimizing discrete variables. A simple heuristic could suffice for this purpose. For instance, in the case of processor assignments, the algorithm might prioritize assigning processors to tasks with higher utilization.

\subsection{Limitations and potential solutions}
\label{limitation_sec}
NORTH relies on the gradient information from the objective function to utilize the gradient-based optimizers.
For example, if $\alpha_i=0$ for all the $i$ in the objective function~\eqref{ls_control}, then the gradient of \eqref{ls_control} with respect to the period variables would be either 0 or undefined. In this case, NORTH probably cannot make any progress. One potential solution to this issue is finding proxy variables or approximating the objective functions with those with meaningful gradient definitions with respect to the variables.

One limitation for NORTH+ is that it is hard to provide a theoretical guarantee on convergence because of the usage of heuristic algorithms to optimize the discrete variables. In general, such heuristic algorithms cannot guarantee to always improve the objective function, while the black-box schedulability constraints make such a guarantee even harder. 
In practical implementation, however, guaranteeing convergence can be easily achieved by terminating the iterations as soon as the new iteration cannot improve upon the previous iteration.

\section{Experiments}
NORTH and NORTH+ are implemented\footnote{
\href{https://github.com/zephyr06/Energy_Opt_NLP}{https://github.com/zephyr06/Energy$\_$Opt$\_$NLP}
} in C++, where the numerical optimization algorithms are provided by a popular library GTSAM~\cite{Dellaert12book_FactorGA} in the robotics area~\cite{Dellaert2017FactorGF, Kaess2008iSAMIS, Mukadam18ar_steap, Wang2020RobotCU}.
We perform the experiments on a desktop (Intel i7-11700 CPU, 16 GB Memory) with the following methods:
\begin{itemize}
    \item Zhao20, from \cite{Zhao2020AnOF}. It requires the schedulability analysis to be sustainable to be applicable.
    If not time-out, it finds the optimal solution.
    \item MIGP, mixed-integer geometric programming. The MIGP is solved by gpposy~\cite{gpposy} with BnB solver in YALMIP~\cite{Lofberg2004YALMIPA}.
    \item IPM, a famous constrained optimization method~\cite{Nocedal2006NumericalO2}. It is implemented with IFOPT and IPOPT~\cite{Wchter2006OnTI, ifopt}. 
    Numerical gradients are used in experiments to be applicable.
    \item NMBO, introduced in Section~\ref{section_nmbo}.
    \item NORTH, the optimization framework shown in \Figref{main_framework_fig}.
    \item SA, simulated annealing. The cooling rate is 0.99, the temperature is $10^5$, and the iteration limit is $10^6$.
\end{itemize}
LM~\eqref{lm_update} is the unconstrained optimizer in all the NORTH-related methods.
There is a time limit of 600 seconds for each method. 
We use the initial solution during performance evaluation if a baseline method cannot find a feasible solution within the time limit. 
The initial solutions in all the experiments are obtained following the heuristics in Section~\ref{initial_solution_section}.
Some methods are not shown in the figures because they are not applicable or require too much time to run. 


\begin{figure*}[ht!]
    \centering 
     \begin{subfigure}[t]{0.49\textwidth} 
        \centering
        \includegraphics[height=0.8\linewidth]{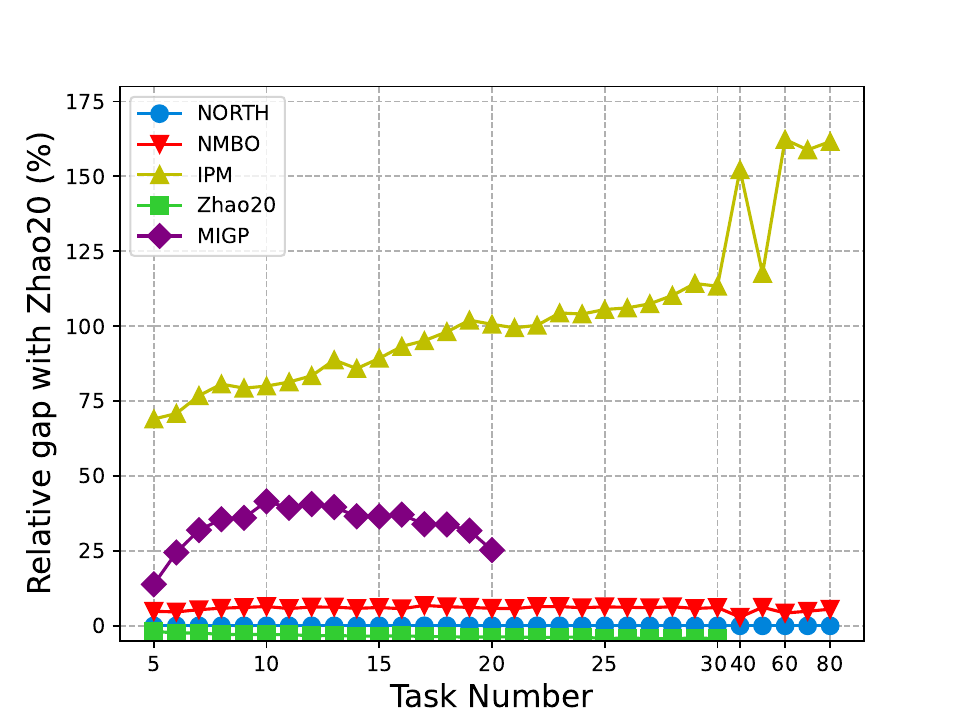}
        \caption{DVFS performance with RM Scheduling}
        \label{fig:compare_energy_ll}
    \end{subfigure}
    \begin{subfigure}[t]{0.49\textwidth}
        \centering
        \includegraphics[height=0.8\linewidth]{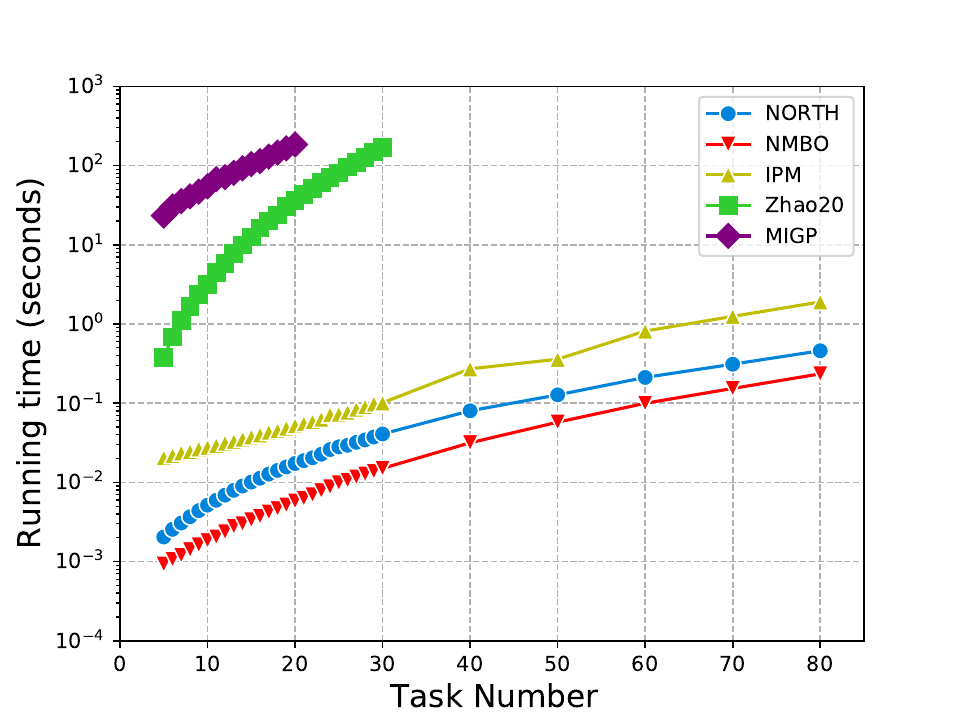}
            \caption{DVFS log run-time  with RM Scheduling}
    \label{fig:compare_energy_speed_ll}
    \end{subfigure}
\begin{subfigure}[t]{0.49\textwidth} 
        \centering
        \includegraphics[height=0.8\linewidth]{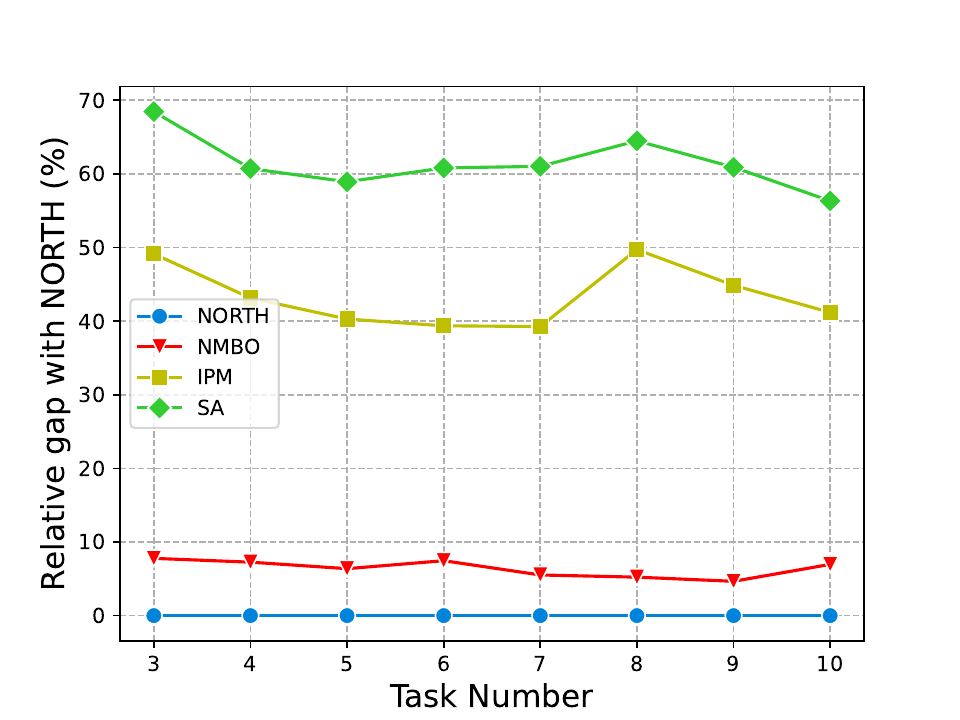}
        \caption{DVFS performance with DAG~\cite{Nasri2019ResponseTimeAO}}
        \label{fig:compare_energy_dag}
    \end{subfigure}
    \begin{subfigure}[t]{0.49\textwidth}
        \centering
        \includegraphics[height=0.8\linewidth]{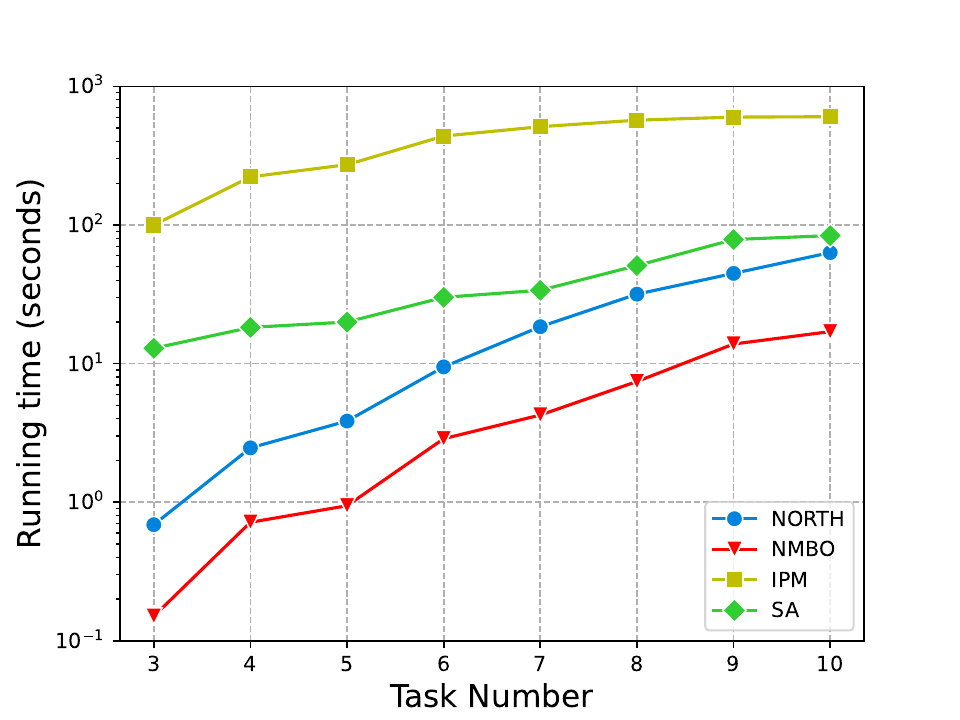}
    \caption{DVFS log run-time with DAG \cite{Nasri2019ResponseTimeAO}}
    \label{fig:compare_energy_speed_dag}
    \end{subfigure}

     \begin{subfigure}[t]{0.49\textwidth} 
        \centering
        \includegraphics[height=0.8\linewidth]{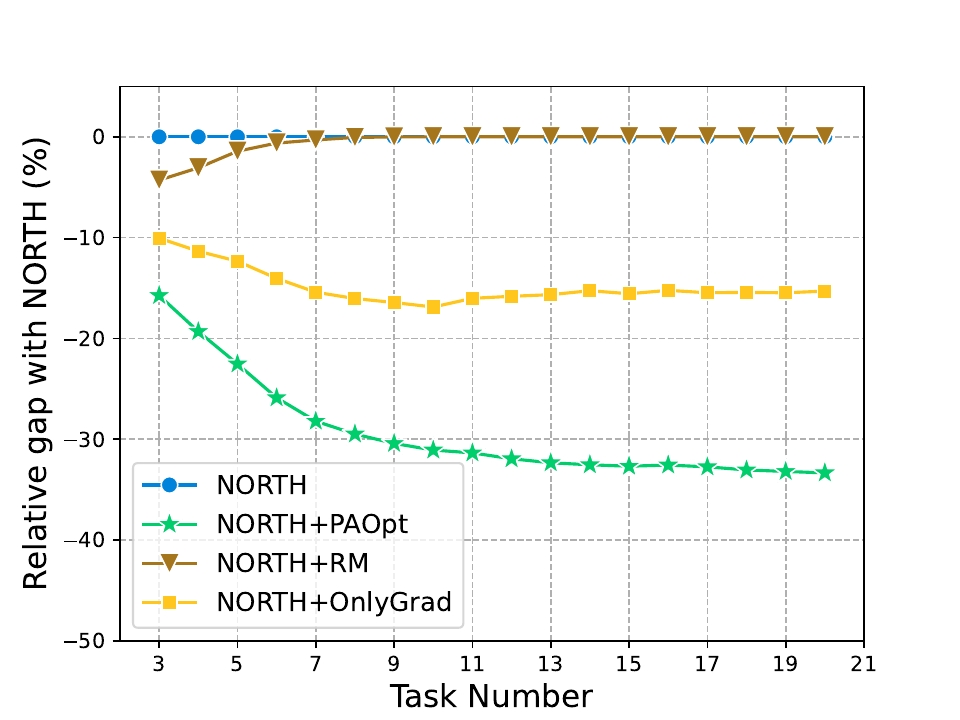}
        \caption{Hybrid control performance optimization}
        \label{fig:compare_control_hybrid_performance}
    \end{subfigure}
    \begin{subfigure}[t]{0.49\textwidth}
        \centering
        \includegraphics[height=0.8\linewidth]{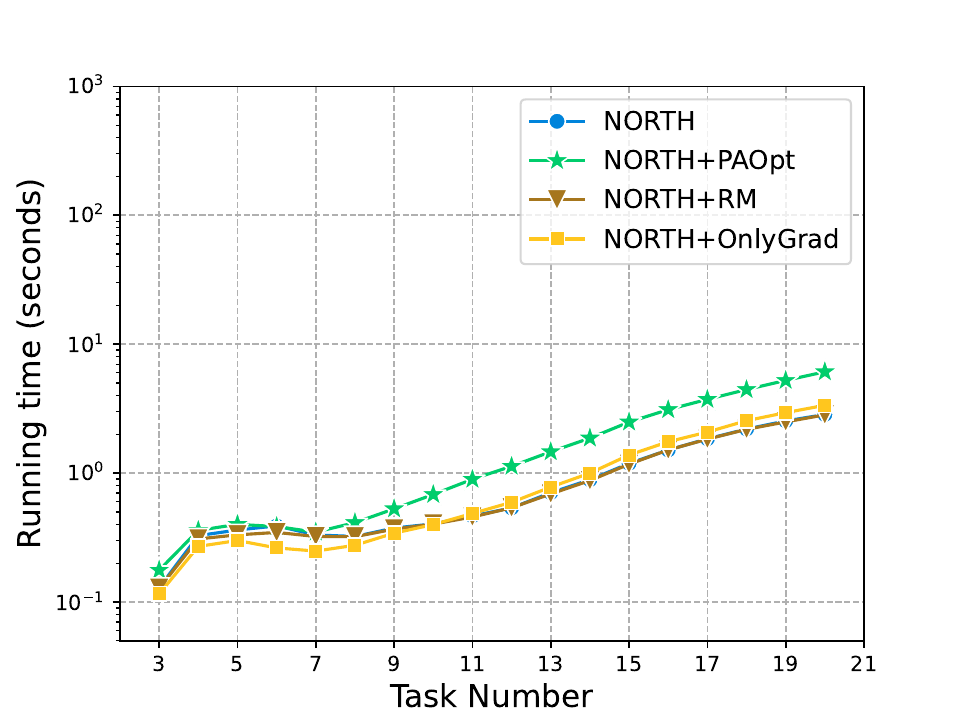}
            \caption{Hybrid control optimization log run-time}
    \label{fig:compare_control_hybrid_time}
    \end{subfigure}


    
    \caption{
   The performance and run-time speed of the experiments. 
    }
\end{figure*}

\subsection{Energy Optimization based on RM Scheduling}
In this experiment, we adopted the same settings as in Zhao~\etal~\cite{Zhao2020AnOF} for the convenience of comparison. The objective function is simplified as follows:
\begin{equation}
    E_i(\textbf{f}) =\frac{H}{T_i} ( \alpha_E \textbf{f}_i^{3}) \times ( \frac{c_i^{\text{org}}}{\textbf{f}_i})
\end{equation}
where $c_i^{\text{org}}$ denotes the WCETs for $\textbf{f}_i=1$.

We generate the task sets randomly as follows. The task set's utilization was randomly generated from $0.5$ to $0.9$.
For each task, its period was selected from a log-uniform distribution from $10^2$ to $10^5$; its utilization was generated using Uunifast~\cite{Davis2008EfficientES}; 
its run-time frequency $f_i$ is lower-bounded by half of its initial value $f_i^{(0)}$ (\ie $f_i \geq 0.5 f_i^{(0)}$);
its deadlines were the same as its period; its priorities were assigned based on the Rate Monotonic (RM) policy. The response time analysis~\eqref{rta_LL} and~\eqref{sched_model_ll} provide the schedulability constraints.

In \Figref{compare_energy_ll}, the relative gap between a baseline method and Zhao20 is defined as follows:
\begin{equation}
    \frac{E_{\text{baseline}}-E_{\text{Zhao20}}}{E_{\text{Zhao20}}} \times 100\%
    \label{experiment_criteria}
\end{equation}

\subsection{Energy Optimization for DAG Model}
\label{energy_opt_dag_exp_section}
In the second experiment, we consider the exact optimization problem~\eqref{ls_energy} where the scheduling algorithm and the schedulability analysis are provided by Nasri~\etal~\cite{Nasri2019ResponseTimeAO}. 
The schedulability analysis considers a task set whose dependency is modeled by a directed acyclic graph (DAG). The task set is executed on a non-preemptive multi-core computation platform.
The schedulability analysis is not sustainable with respect to either tasks' worst-case execution time or periods. 
Therefore, MIGP or Zhao20 cannot be applied to solve this problem.

The simulated task sets were generated randomly, where each task set contained several periodic DAG tasks. 
We generate the DAG structures following~\cite{He2021ResponseTB}, which randomly added an edge from one node to another with a probability (0.2 in our experiments). 
The number of nodes within each DAG follows a uniform distribution between 1 and 20. 
Nodes of the same DAG have the same period.
The period of each DAG was randomly selected according to a popular automotive benchmark~\cite{Kramer15benchmark} in the literature~\cite{Verucchi2020LatencyAwareGO, Bozhko2021MonteCR}. 
Considering the current schedulability analysis is time-consuming, the actual periods were randomly selected within a sub-set: $\{1, 2, 5, 10, 20, 50, 100\}$.  
The number of DAGs, $N$, in each task set ranges from 3 to 10. For each $N\in [3, 10]$, 105 task sets were generated, where 15 task sets were generated for each total utilization ranging from $0.1\times 4$ to $0.9 \times 4$. 
All the task sets were assumed to execute on a homogeneous 4-core computing platform. 
The utilization of each DAG within a task set and the execution time of each node are all generated with a modified Uunifast~\cite{Davis2008EfficientES} algorithm, which makes sure that each DAG/node's utilization is no larger than 100$\%$. 
During optimization, the relative error tolerance is $10^{-3}$, and the initial elimination tolerance is 10.

The execution time of all the nodes of all the DAGs are the optimization variables.
When $N=10$, there were 100 nodes on average to optimize, 200 nodes at most. 
\Figref{compare_energy_dag} and \Figref{compare_energy_speed_dag} show the results, where the relative performance gap is calculated against the initial energy consumption.

\subsection{Control Performance Optimization}
\label{exp_north_control}
Our third experiment considers the control performance optimization problem in Section~\ref{control_opt_intro}.
This problem is more challenging because the objective function contains non-differentiable response time functions; furthermore, both continuous variables (periods) and discrete variables (priority assignments) need to be optimized. 

The schedulability analysis is again given by Nasri~\etal~\cite{Nasri2019ResponseTimeAO}. 
The simulated task sets were generated similarly to the energy optimization experiment above, except for the following. The execution time of each node is generated randomly within the range [1, 100]. Within each task set, the initial period of all the nodes of all the DAG is set as the same value: five times the sum of all the nodes' execution time, rounded into a multiple of 1000.   
In the objective function~\eqref{ls_control}, the random parameters are generated following Zhao~\etal~\cite{Zhao2020AnOF} except that we add a small quadratic term: $\alpha_i$ was randomly generated in the range [1, $10^3$], $\beta_i$ was generated in the range [1, $10^4$], $\gamma_i$ is randomly generated in the range [-10, 10] ($\gamma_i$ is much smaller than $\alpha_i$ and $\beta_i$ following the cost function plotted in Mancuso~\etal~\cite{Mancuso2014OptimalPA}).

Since the schedulability analysis is sensitive to the hyper-period of the task sets, we limit the choice of tasks' period parameters into a discrete set $\{1,2,3,4,5,6,8\} \times \{100,1000,10000\}$ during optimization. The choices of period parameters are similar to the literature~\cite{Zeng2013AnEF, Kramer15benchmark} but provide more options.

The optimization variables are the periods of all the DAGs and the priorities of all the nodes of all the DAGs. The period of all the nodes within the same DAG is assumed to be the same. 
When $N=20$, there are $20$ period variables and 200 priority variables on average. 

To the best of our knowledge, no known work considers similar problems. Therefore, we add three more baseline methods to compare the performance of priority assignments. 
These methods follow the basic framework of NORTH+ in \Figref{fig_overview_co_design} except that the \textit{discrete optimization} step is performed based on different heuristics:
\begin{itemize}
    \item NORTH+RM: assign priorities to tasks based on RM.
    \item NORTH+OnlyGrad: assign priorities to tasks based on the gradient of objective functions with respect to the period variables. Tasks with bigger gradients have higher priority. 
    In other words, the $w$ parameter in equation~\eqref{eq_north_plus_gradient} is 0.
    \item NORTH+PAOpt, the priority optimization algorithm proposed in Section~\ref{section_hybrid_optimization}
\end{itemize}

The results are shown in \Figref{compare_control_hybrid_performance} and \Figref{compare_control_hybrid_time}.

\subsection{Result Analysis and Discussions}
\subsubsection{NORTH analysis}
The first two experiments showed that NORTH can achieve excellent performance while maintaining fast runtime speed. Compared with the state-of-the-art methods Zhao20~\cite{Zhao2020AnOF} (it finds global optimal solutions if not time-out), NORTH maintains similar performance (around 1$\sim$3$\%$) while running $10^2\sim10^5$ times faster in example applications. 
Furthermore, in cases of tight time budget, NORTH could achieve even better performance in some cases.
NORTH also outperforms the classical numerical optimization algorithm IPM in terms of both performance and speed because IPM cannot directly handle the non-differentiable schedulability constraint~\eqref{schedulability_analysis_true_false}, even with numerical gradients.
Besides, NORTH shows more stable performance improvements under different situations than SA and IPM (SA adopts a stochastic searching strategy; numerical gradient at non-differentiable points in IPM is not predictable).
Finally, our experiments also show that NORTH supports optimizing large systems with fast speed.

\subsubsection{NORTH+ analysis} 
\Figref{compare_control_hybrid_performance} compares the performance between NORTH and NORTH+.
Built upon NORTH, NORTH+ supports optimizing continuous and discrete variables and further improves performance. Compared with simple heuristics such as RM and OnlyGrad, the proposed priority assignment algorithm achieved the best performance because it properly balanced both the objective function and the system's schedulability requirements. 
\Figref{compare_control_hybrid_time} shows the run-time speed of the baseline methods. NORTH+ runs slower because it has to perform the extra priority assignments. However, all the algorithms have similar and good scalability as the task set scales bigger.

\subsection{Obtaining feasible initial solutions} 
\label{section_exp_initial_solution}
In our experiments, the initial solution was derived using a simple heuristic: selecting tasks with the shortest execution time and longest period. While optimal for sustainable schedulability analysis, this approach may fail in non-sustainable cases.

To evaluate its effectiveness, we conducted an additional experiment (Based on section~\ref{energy_opt_dag_exp_section}), considering three DAGs per task set, each with an average of 10 nodes. A 100-second limit was imposed for verifying feasible solutions, and task sets were discarded if this limit was exceeded. Overall, 3343 out of 4500 random task sets were schedulable, with the heuristic identifying feasible solutions in 96\% of these cases. Detailed statistics are presented in Table~\ref{initial_feasible_table}.


\begin{table}[ht]
\centering
\caption{
Proportion of schedulable random DAG task sets using shortest execution time (Schedulability analysis from \cite{Nasri2019ResponseTimeAO}, across per-core average utilization).}
\begin{tabular}{@{}cccccccccc@{}}
\toprule
\begin{tabular}[c]{@{}c@{}}Utilization (\%)\end{tabular} & 10  & 20  & 30  & 40  & 50 & 60 & 70 & 80 & 90 \\ \midrule
\begin{tabular}[c]{@{}c@{}}Proportion (\%)\end{tabular}   & 100 & 100 & 100 & 100 & 97 & 93 & 88 & 72 & 47 \\ \bottomrule
\end{tabular}%
\label{initial_feasible_table}
\end{table}


\section{Conclusion}
In this paper, we introduce NORTH, a general and scalable framework for real-time system optimization based on numerical optimization methods. 
NORTH is general because it works with arbitrary black-box schedulability analysis, drawing inspiration from classical numerical optimization algorithms. 
However, NORTH distinguishes itself from them by how to identify and manage active constraints while satisfying the schedulability constraints.
Furthermore, novel heuristic algorithms are also proposed to perform collaborative optimization of continuous and discrete variables (\eg categorical variables and priority assignments) iteratively guided by numerical algorithms.
The framework's effectiveness is demonstrated through two example applications: energy consumption minimization and control system performance optimization. 
Extensive experiments suggest that the framework can achieve similar solution quality as state-of-the-art methods while running $10^2$ to $10^5$ times faster and being more broadly applicable.


\bibliographystyle{ieeetr}
\bibliography{scheduling} 

\end{document}